\crefname{prop}{Proposition}{Proposition}
\crefname{exampleCounter}{example}{examples}
\crefname{ex}{example}{examples}
\crefname{lem}{Lemma}{Lemmas}
\crefname{alg}{Algorithm}{Algorithms}
\newcommand{\zl}{Z_L}
\newcommand{\problemtitle}[1]{\gdef\@problemtitle{#1}}%
\newcommand{\probleminput}[1]{\gdef\@probleminput{#1}}%
\newcommand{\problemquestion}[1]{\gdef\@problemquestion{#1}}%
  \par\addvspace{.5\baselineskip}
  \par\addvspace{.5\baselineskip}
\begin{document}

\title{Automated Synthesis of \\ Fault-Tolerant State Preparation Circuits for Quantum Error Correction Codes}

\author{Tom Peham}
\email{tom.peham@tum.de}
\affiliation{Chair for Design Automation, Technical University of Munich, Germany}

\author{Ludwig Schmid}
\email{ludwig.s.schmid@tum.de}
\affiliation{Chair for Design Automation, Technical University of Munich, Germany}

\author{Lucas Berent}
\email{lucas.berent@tum.de}
\affiliation{Chair for Design Automation, Technical University of Munich, Germany}

\author{Markus Müller}
\email{markus.mueller@fz-juelich.de}
\affiliation{Institute for Quantum Information, RWTH Aachen University, D-52056 Aachen, Germany }
\affiliation{Peter Grünberg Institute, Theoretical Nanoelectronics, Forschungszentrum Jülich, D-52425 Jülich, Germany}

\author{Robert Wille}
\email{robert.wille@tum.de}
\affiliation{Technical University of Munich}
\affiliation{Software Competence Center Hagenberg, Austria}
\begin{abstract}
A central ingredient in fault-tolerant quantum algorithms is the initialization of a logical state for a given quantum error-correcting code from a set of noisy qubits. 
A scheme that has demonstrated promising results for small code instances that are realizable on currently available hardware composes a non-fault-tolerant state preparation circuit with a verification circuit that checks for spreading errors.
Known circuit constructions of this scheme are mostly obtained manually, and no algorithmic techniques for constructing depth- or gate-optimal circuits exist.
As a consequence, the current state-of-the-art exploits this scheme only for specific code instances and mostly for the special case of distance $d=3$ codes only.
In this work, we propose an automated %
approach for synthesizing fault-tolerant state preparation circuits for arbitrary CSS codes.
We utilize methods based on satisfiability solving (SAT) to construct fault-tolerant state preparation circuits consisting of depth- and gate-optimal preparation and verification circuits. 
We also provide heuristics that can synthesize fault-tolerant state preparation circuits for code instances where no optimal solution can be obtained in an adequate time.
Moreover, we give a general construction for non-deterministic state preparation circuits for codes beyond distance 3.
Numerical evaluations using $d=3$, $d=5$ and $d=7$ codes confirm that the generated circuits exhibit the desired scaling of the logical error rates.
The resulting methods are publicly available as part of the \emph{Munich Quantum Toolkit}~(MQT) at \url{https://github.com/cda-tum/mqt-qecc}.
Such methods are an important step in providing fault-tolerant circuit constructions that can aid in near-term demonstrations of fault-tolerant quantum computing.
\end{abstract}
\maketitle

\section{Introduction}
Quantum bits and quantum operations suffer from unavoidable decoherence and noise. 
Therefore, quantum error-correction mechanisms need to be used to ensure that a quantum algorithm can be executed in a fault-tolerant manner to control the accumulation of errors.
Error-correcting codes leverage redundancy by encoding quantum information in logical qubits formed by entangled states of noisy, physical qubits.
Consequently, it is important that universal operations can be carried out on the encoded information fault-tolerantly
and that errors during the execution can be detected and corrected.
A crucial step in any fault-tolerant circuit is the preparation of an encoded logical state.

To ensure that the logical state is initialized correctly -- without containing too many errors -- the preparation itself needs to be fault-tolerant. 
Intuitively, a requirement for fault-tolerant circuits is that errors on single qubits cannot uncontrollably ``spread'' to multiple qubits, for instance, through CNOT gates.
For \emph{Calderbank-Shor-Steane} (CSS) codes~\cite{calderbankGoodQuantumErrorcorrecting1996,steaneMultipleParticleInterference1996,steaneErrorCorrectingCodes1996}, one way of initializing an encoded state fault-tolerantly is to prepare all physical qubits in $\ket{0}$ or $\ket{+}$ and perform one round of error correction---projecting the product state onto the logical code space~\cite{nielsenQuantumComputationQuantum2010}.
For large code instances, this overhead might be manageable, especially for QLDPC codes and single-shot codes~\cite{breuckmannQuantumLowdensityParitycheck2021, bombinSingleShotFaultTolerantQuantum2015, delfosseSingleShotFaultTolerantQuantum2022, quintavalleSingleShotErrorCorrection2021, fawziConstantOverheadQuantum2020}.

This general procedure introduces significant overhead in circuit size, especially for small code instances that are relevant for near-to-midterm devices since all stabilizer generators must be measured multiple times in an error correction round.
Consequently, the near-term experimental realization of this scheme is difficult.
\begin{figure*}[t]
  \centering
  \captionsetup{
    labelfont=bf,        %
    justification=raggedright,
}
\includegraphics[width=\textwidth]{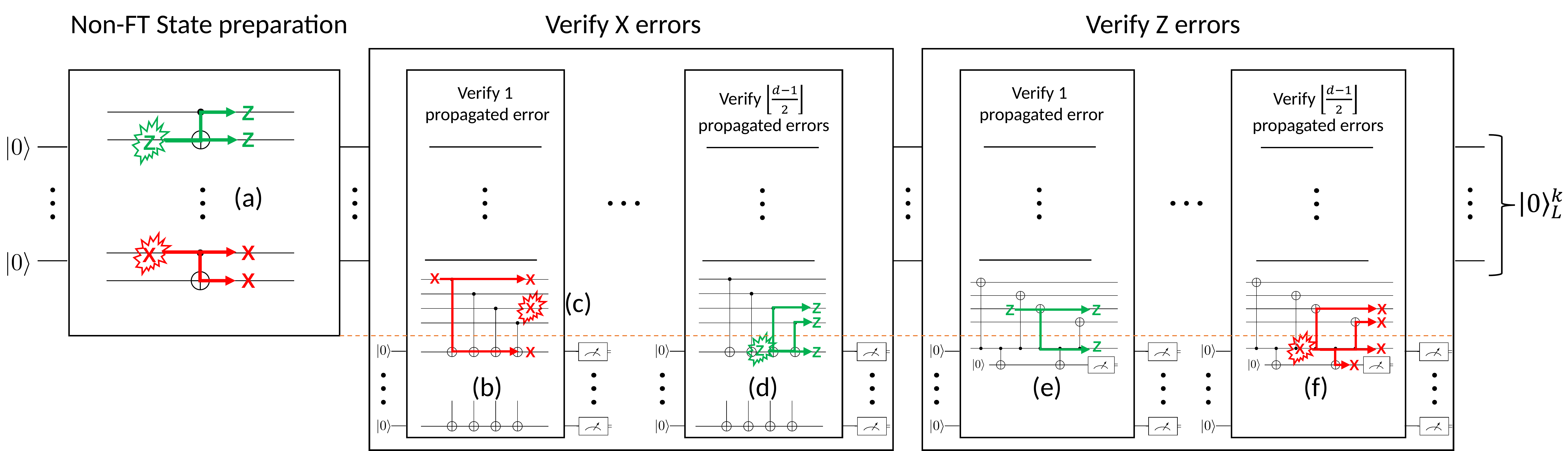}
  \caption{Full non-deterministic fault-tolerant state preparation protocol for CSS codes. 
  A sequence of stabilizer measurements follows a non-fault-tolerant state preparation circuit to check if any errors propagated through the circuit. 
  The state is accepted if no measurement in the verification blocks indicates an error. 
  \textbf{(a)} Errors in the non-fault-tolerant state preparation circuits propagate to higher-weight errors through CNOTs between data qubits. 
  \textbf{(b)} If an error of at most weight $i$ occurred in the state preparation circuit and propagated to a higher-weight error, measurements in the $i$th layer of the verification circuit detect this error. 
  \textbf{(c)} Propagated errors are still detected by later layers of verification, even if further errors occur during a stabilizer measurement. 
  \textbf{(d)} Z errors on the ancilla during the verification of X errors propagate to the data qubits. 
  \textbf{(e)} If a Z error propagated through the non-FT state preparation circuit \emph{or} the Z measurements of the previous verifications, it is later detected by flag fault-tolerant measurements. 
  \textbf{(f)} Propagated X errors on the X measurement ancillas are detected by flag qubit measurements.} 
  \label{fig:full-ft}
\end{figure*}
An alternative that tries to reduce the number of measurements is a non-deterministic protocol, based on post-selection~\cite{gotoMinimizingResourceOverheads2016,bermudezFaulttolerantProtectionNearterm2019,postlerDemonstrationFaulttolerantUniversal2022, chamberlandFaulttolerantMagicState2019}. 
Therein, the initialization is performed in two steps:
\begin{enumerate}
    \item Prepare the logical state using a non-fault-tolerant state preparation circuit.
    \item Conduct specific measurements on the prepared states using a so-called \emph{verification circuit} that detects if an error spreads through the circuit. 
    If at least one of these measurements measures a $-1$ eigenvalue, the state is discarded, and the protocol is restarted.
\end{enumerate} 
While this and related techniques were successfully demonstrated in recent in experiments~\cite{buttFaultTolerantCodeSwitchingProtocols2024, bermudezFaulttolerantProtectionNearterm2019, heussenMeasurementFreeFaultTolerantQuantum2024, postlerDemonstrationFaulttolerantUniversal2022,ryan-andersonRealizationRealtimeFaulttolerant2021, bluvsteinLogicalQuantumProcessor2024, m.p.dasilvaDemonstrationLogicalQubits2024, pogorelovExperimentalFaulttolerantCode2024}, there are clear drawbacks of current approaches: 
First, most of these techniques rely on manual construction, which is tedious and becomes infeasible when scaling to larger codes, where both the number of errors and the number of possible verification measurements increase quickly. 
Moreover, as for the general scheme outlined above, there is, in general, no guarantee that these manually constructed circuits are gate- or depth-optimal.
Finally, these constructions have mostly been explored for specific code instances for small distance 3 (or 2) codes~\cite{buttFaultTolerantCodeSwitchingProtocols2024,postlerDemonstrationFaulttolerantUniversal2022, gotoMinimizingResourceOverheads2016, zenQuantumCircuitDiscovery2024} that constitute a special case. 
Even for cases beyond distance 3, constructions of non-deterministic circuits were derived manually for specific states~\cite{chamberlandFaulttolerantMagicState2019} or codes~\cite{goswamiFaulttolerantPreparationQuantum2023, goswamiFactorybasedFaulttolerantPreparation2024}.
In general, constructing a verification circuit that ensures fault-tolerance of the overall scheme is highly non-trivial and depends on the code (and the logical state to prepare).
Thus, the manual construction in existing works is not generally applicable.
Moreover, even though post-selection schemes do not scale well to large distances due to the exponentially decreasing acceptance rate, optimal circuit constructions might still be viable for moderately sized codes on near-term hardware. Furthermore, the improvements to circuit size for small codes are amplified when concatenating codes, as the same state preparation circuit can be used for higher levels of concatenation where a single CNOT translates to a transversal CNOT between all data qubits of two code states. In particular, with the continuously improving error rates on current devices, experimental demonstrations of higher-distance codes might be possible with a verification-based scheme. 
To this end, the construction needs to be generalized beyond distance 3, and algorithmic methods that can be applied to a broad range of codes must be developed.

In this work, we address these issues and propose automated methods for the synthesis of fault-tolerant state preparation circuits for \emph{arbitrary} CSS codes. 
Our main contributions are as follows:
\begin{itemize}
    \item We separate the synthesis problems for a state preparation and verification circuit and tackle them individually. 
    We frame both problems as binary optimization problems and discuss their computational complexity to provide a motivation for the techniques we use.
    \item Based on that, we utilize satisfiability solving (SAT) techniques, which, through the use of highly optimized solving algorithms, have been shown to be successful in dealing with small instances of optimization problems -- both in the classical domain~\cite{biereSATbasedModelChecking2018, brandVerificationLargeSynthesized1993, eggersglussImprovedSATbasedATPG2013, gebregiorgisTestPatternGeneration2019, kaufmannVerifyingLargeMultipliers2019, willeSMTbasedStimuliGeneration2009,khomenkoLogicSynthesisAsynchronous2004} and for quantum computing~\cite{willeMappingQuantumCircuits2019, tanOptimalLayoutSynthesis2020, tanSATScalpelLattice2024, shuttyDecodingMergedColorSurface2022, pehamDepthOptimalSynthesisClifford2023}. 
    Thereby, we obtain exact solutions and, consequently, gate- or depth-optimal state preparation~%
    circuits and verification circuits that are gate-optimal with respect to a given state preparation circuit. 
    \item 
    To address the scalability issues of SAT techniques, we also propose \emph{heuristic} solutions for synthesizing state preparation and verification circuits. 
    While these do not guarantee optimal solutions, this will allow us to obtain solutions for larger problem instances 
    \item To use the proposed methods for synthesizing logical basis states for CSS codes with distances greater than 3, we propose a generalized non-deterministic fault-tolerant state preparation scheme that guarantees fault-tolerance by composing multiple verification circuits.
    The resulting scheme is summarized in \Cref{fig:full-ft}. 
\end{itemize}
In short, our contributions can be viewed as providing (optimal) circuits for every sub-circuit depicted in~\Cref{fig:full-ft}. 
To investigate the error suppression of the circuits obtained with the proposed methods, we synthesize non-deterministic fault-tolerant state preparation circuits for various distance 3 CSS codes and provide numerical simulations of the logical error rate under a circuit-level noise model using Stim~\cite{gidneyStimFastStabilizer2021}. 
Comparison with the state-of-the-art reinforcement learning-based synthesis method from Ref.~\cite{zenQuantumCircuitDiscovery2024} shows that our methods are competitive---producing equally good or better circuits than the reinforcement learning agent. 
Moreover, numerical evaluations show that the proposed scheme does indeed give circuits that exhibit the desired logical error rate scaling for codes beyond distance 3. 
Finally, all proposed methods are made publicly available in the form of open-source software as part of the \emph{Munich Quantum Toolkit}~(MQT~\cite{willeMQTHandbookSummary2024}) at \hyperlink{https://github.com/cda-tum/mqt-qecc}{https://github.com/cda-tum/mqt-qecc}.

This manuscript is structured as follows. 
In~\Cref{sec:background}, we review notation, give fundamental notions from quantum coding, and define the notion of a fault-tolerant state preparation protocol.
~\Cref{sec:problem-motivation} describes the problem of synthesizing non-deterministic fault-tolerant state preparation circuits and positions our work in the context of existing approaches. 
In~\Cref{sec:state-prep}, we describe our optimal and heuristic synthesis methods for non-fault-tolerant state preparation for CSS codes. 
How to verify such circuits and make them fault-tolerant is then described in~\Cref{sec:ver-circ}, where we also discuss the computational complexity of this problem. 
Based on these building blocks, we then detail our generalized scheme for non-deterministic fault-tolerant state preparation in~\Cref{sec:fully-fault-tolerant}.
~\Cref{sec:eval} shows the resulting circuit for our synthesis method for various CSS codes and numerically verifies that our proposed scheme prepares logical basis states of CSS codes in a fault-tolerant manner. Finally,~\Cref{sec:conclusion} concludes this paper.

\section{Background}\label{sec:background}

In this section, we review formal notation and discuss fundamental concepts that are needed throughout the work.
Throughout we use $[i] = \{1,2,\dots, i\}\subseteq \mathbb{Z}$. We make use of the \emph{Iverson Bracket}, which maps Boolean expressions to integers $\llbracket P \rrbracket =
\begin{dcases*}
  1 & if P is true \\
  0 & otherwise
\end{dcases*}$.

\subsection{Quantum Codes}\label{sec:prelim-codes}
An $\llbracket n,k,d \rrbracket$-stabilizer code~\cite{gottesmanStabilizerCodesQuantum1997} is defined by a stabilizer group $S$, which is an abelian subgroup of the $n$-qubit Pauli group $\mathcal{P}_n$.
We require $-I \notin S$ s.t.\ the code is non-trivial.
The set of codewords, i.e., the $k$-dimensional codespace, is defined as the simultaneous $+1$ eigenspace of all elements in $S$.
The generators of the stabilizer group $S = \langle g_1, g_2, \dots g_m\rangle$ are called \emph{checks}.
They are used to infer whether a state is a codeword by measuring all $m$ checks (using additional ancillas). 
The $m$ measurement outcomes can be modeled as vector $s \in \mathbb{F}_2^m$ called the \emph{syndrome}.
If all syndrome bits are $0$ (indicating a $+1$ measurement of the eigenvalue), the state is a codeword.
Otherwise, the syndrome indicates which checks have failed and thus gives information about the approximate location of the error that occurred.
The problem of finding a suitable recovery operation given the syndrome is called \emph{decoding} and is a purely classical task.

A stabilizer code on $n$ physical qubits with $m$ checks encodes $k = n-m$ \emph{logical} qubits. 
The \emph{distance} $d$ of the code is the minimum weight of an operator transforming one codeword into another.

\begin{figure}[t]
  \centering
  \includegraphics[scale=0.25]{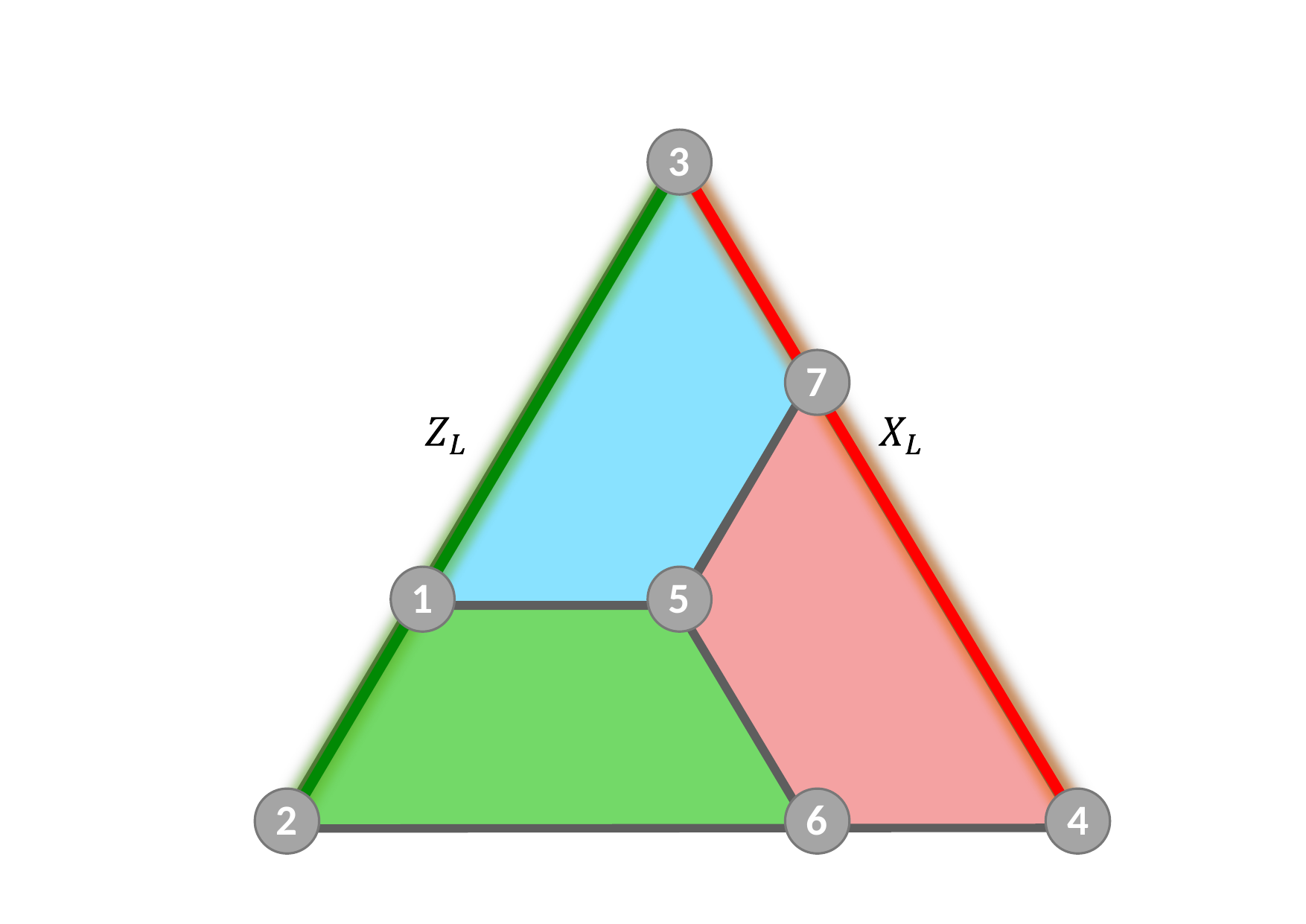}
  \caption{Steane code. The qubits are placed on vertices and X- and Z-stabilizer checks are associated with faces, acting on adjacent vertices.}
  \label{fig:steane-code-ex}
\end{figure}
      
A particularly important class of stabilizer codes are Calderbank-Shor-Steane (CSS) codes~\cite{calderbankGoodQuantumErrorcorrecting1996,steaneMultipleParticleInterference1996}, whose stabilizer generators $g_i$ are either all X, or all Z. 
The corresponding sets are denoted $S_X$ and $S_Z$, respectively.
Note that $X$ errors only anticommute with $Z$ stabilizers and $Z$ errors only anticommute with $X$ stabilizers. Therefore, $X$ and $Z$ type errors can be decoded independently.

\begin{example}
    The distance $3$ color code, also called \emph{Steane code}, is the smallest instance of a family of 2D planar color codes~\cite{bombinIntroductionTopologicalQuantum2013}. 
    It can be visualized as depicted in~\Cref{fig:steane-code-ex}, where qubits are associated with vertices and stabilizer generators with faces. 
    Hence, it is a $\llbracket 7,1,3 \rrbracket$ CSS code defined by the stabilizer generators $S_X = \{X_1X_2X_5X_6, X_1X_3X_5X_7, X_4X_5X_6X_7\}$ and $S_Z = \{Z_1Z_2Z_5Z_6, Z_1Z_3Z_5Z_7, Z_4Z_5Z_6Z_7\}$. Logical operators are associated with the sides of the triangle. For example $X_L = X_3X_4X_7$ and $Z_1Z_2Z_3$ would be minimal-weight logical $X$- and $Z$-operators (indicated by the red and green lines in~\Cref{fig:steane-code-ex}). Due to the symmetry of the $X$- and $Z$-stabilizers, $X_L = X_1X_2X_3$ and $Z_L = Z_3Z_4Z_7$ are also valid logical operators.
\end{example}

The stabilizer generators in $S_X, S_Z$ can be mapped to binary vectors that indicate the support of the respective Pauli operators, e.g., $XXII \in S_X, XXII\mapsto (1,1,0,0)\in \mathbb{F}_2^n$.
Hence, a CSS code can equivalently be defined as two binary matrices called \emph{parity-check matrices}, $H_X\in \mathbb{F}_2^{r_X\times n}, H_Z\in \mathbb{F}_2^{r_Z\times n}$ whose rows are the binary vectors corresponding to the stabilizer generators. 
Since the $X$ and $Z$ type generators have to commute, the matrices have to fulfill the orthogonality condition $H_X H_Z^T = 0$. 

\begin{example}
    The parity-check matrices of the Steane code are identical:
    \begin{equation}
        H_X = H_Z = \left( 
            \begin{matrix}
                1&  1 &  & & 1 &1 & \\
                 1 &  & 1 & & 1 & &1   \\
                 &   & & 1 & 1 & 1 &1
            \end{matrix}
        \right).
    \end{equation}
\end{example}

\subsection{Noise Model and Fault-Tolerance}\label{sec:prelim-noise-model}

We use a standard, depolarizing circuit-level noise model parameterized by the noise strength $p$, as discussed in more detail in~\Cref{sec:eval}.

The weight of an error $E$ is defined as \mbox{$\mathrm{wt}(E) = \min_{E'\in E \cdot S} |\mathrm{supp}(E')|$} i.e., as the minimal support of a stabilizer reduced error with respect to the stabilizer group $S$ of the code. 
We say that an error \emph{propagates} through a two-qubit gate if an error on one qubit before the gate leads to a two-qubit error after the two-qubit gate. 
We call an error in a circuit $\mathcal{C}$ a \emph{propagated error} if it propagates through any gate in $\mathcal{C}$. 

There are slightly different meanings to the term \emph{fault-tolerance} in the literature~\cite{aliferisQuantumAccuracyThreshold2005, gottesmanIntroductionQuantumError2009, chamberlandFlagFaulttolerantError2018}.
Intuitively, the definition we follow means that the uncontrolled spreading and accumulation of errors is avoided.
More formally, we adhere to the following definition for a fault-tolerant state preparation protocol.
\begin{definition}[\textbf{Fault-tolerant state preparation}~\cite{paetznickFaulttolerantAncillaPreparation2013,derksDesigningFaulttolerantCircuits2024}]\label{def:ft}
  A state preparation circuit is \emph{strictly fault-tolerant} if for all $t \leq \floor{\frac{d}{2}}$, any error of probability order $t$ propagates to an error of weight at most $t$.
\end{definition}

\section{Problem Definition \& Motivation}\label{sec:problem-motivation}

In this section, we illustrate the problem of fault-tolerant state preparation with verification circuits and motivate the need for generalized automated approaches.

\begin{figure}[t]
  \centering
  \captionsetup{
    labelfont=bf,        %
    justification=raggedright,
}
\includegraphics[width=.5\textwidth]{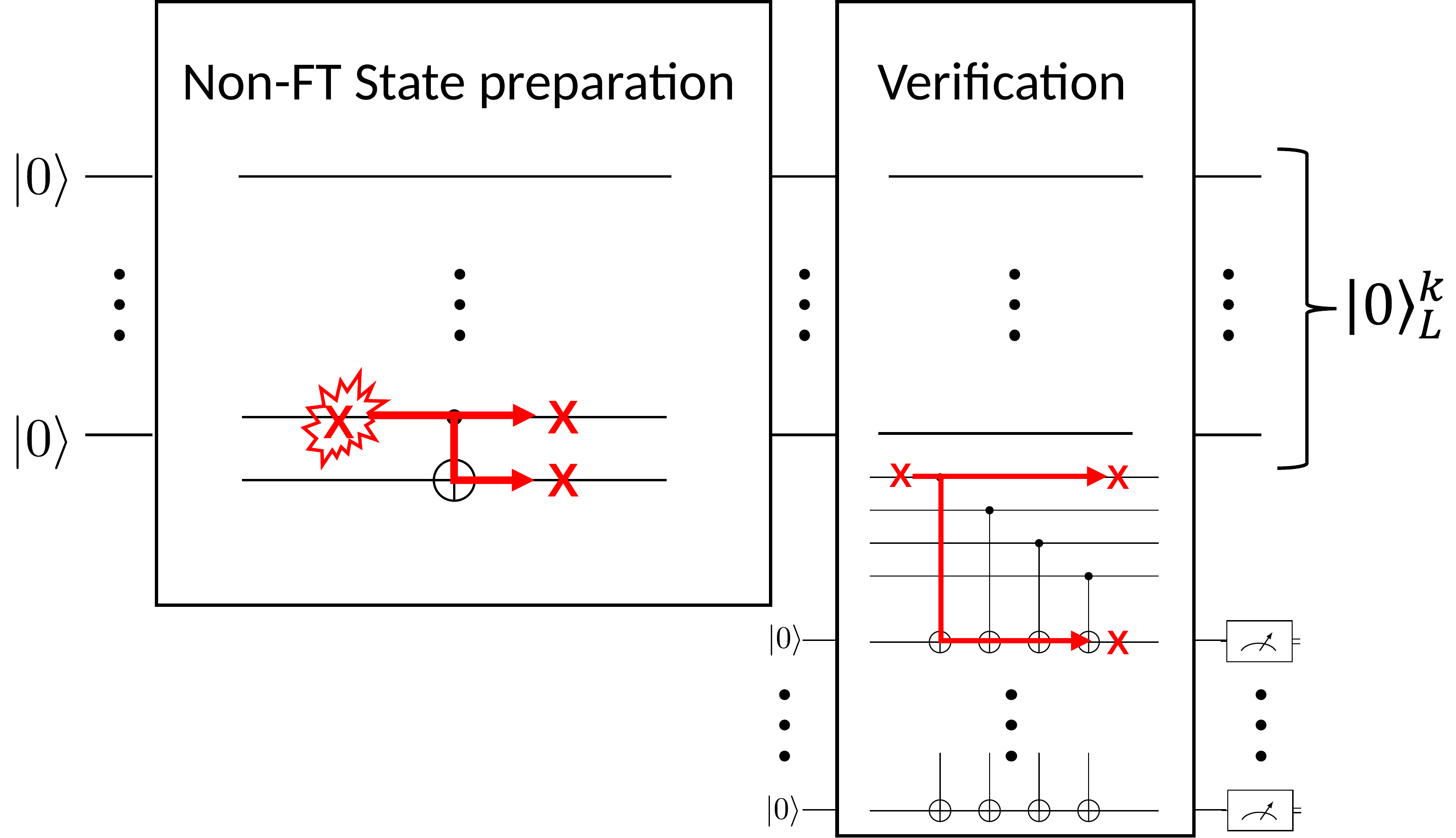}
  \caption{Fault-tolerant Pauli-eigenstate preparation for stabilizer codes. A state is prepared with a non-fault-tolerant encoding circuit, and the state is post-selected on a set of stabilizer measurements called the verification circuit. If one of these measurements flags, the state is discarded, and the process is restarted.}
  \label{fig:scheme-overview}
\end{figure}
\subsection{Considered Problem}
Before starting any fault-tolerant quantum computation, the logical qubits have to be initialized in some logical state---usually a Pauli eigenstate of the quantum error correcting code used to encode the quantum information. 
These states are easily described for an $\llbracket n, 1, d \rrbracket$ CSS code. 
The logical $\ket{0}_L$ state of the code is defined by the stabilizers of the code and the logical $\zl$ operator.
Similarly, the logical $\ket{+}$ state is defined by the stabilizers of the code and the logical $X_L$ operator. 
The Pauli eigenstates of a general $\llbracket n, k, d\rrbracket$ CSS code are defined analogously for corresponding pairs of logical operators $X^i_L, Z^i_L,i\in [k]$. 
For the remainder of this work, we will focus on the preparation of the logical $\ket{0}_L^{\otimes k}$ state, but the methods transfer directly to preparing the $\ket{+}^{\otimes k}$ state by swapping the roles of $X$ and $Z$- stabilizers and operators and reversing the direction of every CNOT in the state preparation circuits.

\begin{figure}[t]
  \centering
    \captionsetup{
    labelfont=bf,        %
    justification=raggedright,
}
  \begin{tikzpicture}
    \begin{yquant}
      qubit {$q_1: \ket{0}$} q[1];
      qubit {$q_2:\ket{+}$} q[+1];
      qubit {$q_3:\ket{+}$} q[+1];
      qubit {$q_4:\ket{+}$} q[+1];
      qubit {$q_5: \ket{0}$} q[+1];
      qubit {$q_6: \ket{0}$} q[+1];
      qubit {$q_7: \ket{0}$} q[+1];

      cnot q[0] | q[1];
      cnot q[5] | q[3];
      cnot q[6] | q[2];
      cnot q[4] | q[1];
      cnot q[0] | q[2];
      cnot q[6] | q[3];
      cnot q[5] | q[1];
      cnot q[4] | q[6];

      [after=q]
      qubit {$\ket{0}$} a[1];
      cnot a[0]|q[0];
      cnot a[0]|q[5];
      cnot a[0]|q[6];
      measure a;
    \end{yquant}
  \end{tikzpicture}
  \caption{Non-deterministic state preparation for the $\ket{0}_L$ of the Steane code from Ref.~\cite{gotoMinimizingResourceOverheads2016}.}
  \label{fig:steane-non-deterministic}
\end{figure}

A simple protocol for preparing Pauli eigenstates is to initialize all physical qubits in $\ket{0}$ and to project the product state onto the code space by measuring the stabilizers of the code. 
The logical $Z^i_L$ operators are already stabilizers of the initial $\ket{0}^{\otimes n}$ and will still be a stabilizer of the projected state since the operators $Z^i_L$ commute with the stabilizers of the code. 
To ensure fault-tolerance in the presence of noisy syndrome measurements, the stabilizer measurements have to be repeated a number of times that is typically proportional to the code distance~\cite{dennisTopologicalQuantumMemory2002,bombinIntroductionTopologicalQuantum2013}.
For Shor-type QEC, the number of required repetitions is in $O(d^2)$~\cite{tansuwannontAdaptiveSyndromeMeasurements2023} for instance.
Additional CNOT overhead is incurred for some measurement schemes when using a flag-fault-tolerance scheme to measure the stabilizers~\cite{chamberlandFlagFaulttolerantError2018}.
For the Steane code, this procedure requires $6\cdot (4+2)\cdot 3 = 108$ CNOTs in the worst case, i.e., the $6$ weight-$4$ stabilizer generators of the Steane code have to be measured $3$ times---the distance of the Steane code---and each measurement requires another $2$ CNOTs for the flags.
For experimental realizations in near-term quantum computers, the error rate of these CNOTs will completely overshadow any error suppression theoretically guaranteed by the Steane code.

For this reason, recent experimental demonstrations of fault-tolerance have used a so-called \emph{non-deterministic} scheme based on post-selection to produce high-fidelity logical states~\cite{buttFaultTolerantCodeSwitchingProtocols2024, bermudezFaulttolerantProtectionNearterm2019, heussenMeasurementFreeFaultTolerantQuantum2024, postlerDemonstrationFaulttolerantUniversal2022,ryan-andersonRealizationRealtimeFaulttolerant2021, bluvsteinLogicalQuantumProcessor2024, m.p.dasilvaDemonstrationLogicalQubits2024, pogorelovExperimentalFaulttolerantCode2024}. 
This approach aims to prepare the logical state in a non-fault-tolerant manner and measure a set of stabilizers (including logical operators) that detect any uncorrectable error caused by a correctable error propagating through the state preparation circuit. 
If one of these measurements is triggered, the state is discarded, and the whole process is restarted. 
The measurement circuit is referred to as the \emph{verification circuit}.
The protocol is sketched in~\Cref{fig:scheme-overview}. 
The verification circuit depends on the specific state preparation circuit, as the errors that the verification measurements should detect are determined by the CNOT gate pattern in the state preparation circuit.

\begin{example}\label{ex:steane-verification}
  For the  $\ket{0}_L$ state of the Steane code, the optimal non-deterministic preparation scheme uses 8 CNOTs for the preparation circuit and a single weight-three measurement of a logical $Z_L$ operator~\cite{gotoMinimizingResourceOverheads2016} (cf.~\Cref{fig:steane-non-deterministic}) in the verification circuit. 
  The only problematic stabilizer-inequivalent errors that can occur in the circuit are $X_5X_7$ and $X_2X_6$. 
  These anti-commutes with the measured $Z_L=Z_1Z_6Z_7$ operator are thus detected in this verification measurement.
\end{example}

Ideally, the verification circuit only indicates an error if a propagated error occurs. 
Unfortunately, this is impossible in general. For example, any single qubit error on the measurement ancilla will trigger the verification measurements. The \emph{acceptance rate} $r_A$ of a non-deterministic preparation scheme is the probability that a state will be accepted, i.e., that no measurement in the verification circuit is triggered. This probability decreases exponentially with the number of measurements even when ignoring errors in the state preparation circuit itself: $p_A\leq (1-p)^m$ where $m$ is the number of measurements.

This work aims to formulate a general algorithm that synthesizes a (near-)optimal circuit for each of these subcircuits for any $\llbracket n, k, d\rrbracket$ CSS code. 

\subsection{Related Work}
\label{sec:related-work}

Recent work has tackled state preparation and verification circuit synthesis using reinforcement learning (RL)~\cite{zenQuantumCircuitDiscovery2024}. 
The biggest difference between our method and the RL approach is that we only focus on CSS codes, while the RL framework works for arbitrary stabilizer codes. Furthermore, we ignore qubit connectivity constraints in this work while this aspect is accounted for in Ref.~\cite{zenQuantumCircuitDiscovery2024}. 
Lastly, the RL framework also provides a method for tackling the state preparation and verification circuit synthesis problem \emph{simultaneously}, whereas we treat them separately. 
The downside of the RL-based approach is that it does not give any optimality guarantees. 
Since the code is a parameter in the training of the RL agent, the runtimes for synthesizing a single circuit are high, even for small codes. 

Since CSS code state preparation circuits consist only of CNOT gates~\cite{kissingerPhasefreeZXDiagrams2022}, the state preparation circuit synthesis problem is closely related to the synthesis of linear reversible circuits. 
This is a well-researched problem with known algorithms reaching the theoretical lower bound on circuit size and depth~\cite{jiangOptimalSpaceDepthTradeOff2022,patelOptimalSynthesisLinear2008,maslovDepthOptimizationCZ2022}. 
These algorithms provide systematic ways that are \emph{asymptotically optimal}. These asymptotic analyses hide a significant constant overhead that we can't ignore for the smaller system sizes we are mainly interested in. 
That is why our state preparation circuit synthesis approach is more closely related to greedy CNOT circuit synthesis methods as proposed in Ref.~\cite{timotheegoubaultdebrugiereGaussianEliminationGreedy2021}. 
The most important difference between the considered problems and linear circuit synthesis is that we have more degrees of freedom. 
Although both problems can be framed in terms of matrix elimination over $\mathbb{F}_2$, the matrices in our setting are not necessarily invertible. 
Therefore, many different circuits prepare the same state despite implementing different linear maps.

Another closely related problem is that of general Clifford and stabilizer state preparation circuit synthesis. 
For these, there are also optimal~\cite{bravyi6qubitOptimalClifford2022,schneiderSATEncodingOptimal2023,pehamDepthOptimalSynthesisClifford2023} and asymptotically optimal~\cite{aaronsonImprovedSimulationStabilizer2004,bravyiCliffordCircuitOptimization2021,maslovLinearDepthStabilizer2007,maslovDepthOptimizationCZ2022} methods. Our SAT-based optimal state preparation circuit synthesis method is closely related to Ref.~\cite{pehamDepthOptimalSynthesisClifford2023} and Ref.~\cite{schneiderSATEncodingOptimal2023}. 
The optimal synthesis in Ref.~\cite{pehamDepthOptimalSynthesisClifford2023} is focused only on the depth-optimal synthesis of Clifford circuits and does not provide an encoding to obtain (two-qubit) gate-optimal circuits. 
More importantly, as with the synthesis of linear circuits, there are more degrees of freedom that previous work does not take advantage of when focusing only on state preparation. 
The previous work on optimal stabilizer state preparation of Ref.~\cite{schneiderSATEncodingOptimal2023} does not make use of this freedom either and does not guarantee the synthesis of a globally optimal circuit. 
Optimality is only guaranteed with respect to a specific stabilizer generator set. In the extreme case, Ref.~\cite{schneiderSATEncodingOptimal2023} synthesizes a linear-sized circuit for the $\ket{0}^{\otimes n}$ state if a particularly unfortunate representation of its stabilizers is given as an input. 
For more complex states, it is not clear which generating set for the state's stabilizer group yields the best circuit. 
We address this specifically in~\Cref{sec:state-prep}.

\section{Synthesis of\\State Preparation Circuits}\label{sec:state-prep}

In this section, we present a method for creating logical basis states for any CSS code given the stabilizer generators of the code. 
We focus on creating $\ket{0}^{\otimes k}_L$ state, but the method works exactly the same for the $\ket{+}^{\otimes k}_L$ state by swapping the roles of $X$ and $Z$ stabilizer and changing directions of CNOTs.
For the rest of this section, we consider an $\llbracket n, k, d \rrbracket$ CSS code with $m$ $X$ stabilizer generators.

\begin{figure*}[t!]
  \centering
   \captionsetup{
    labelfont=bf,        %
    justification=raggedright,
}
  \includegraphics[width=\textwidth]{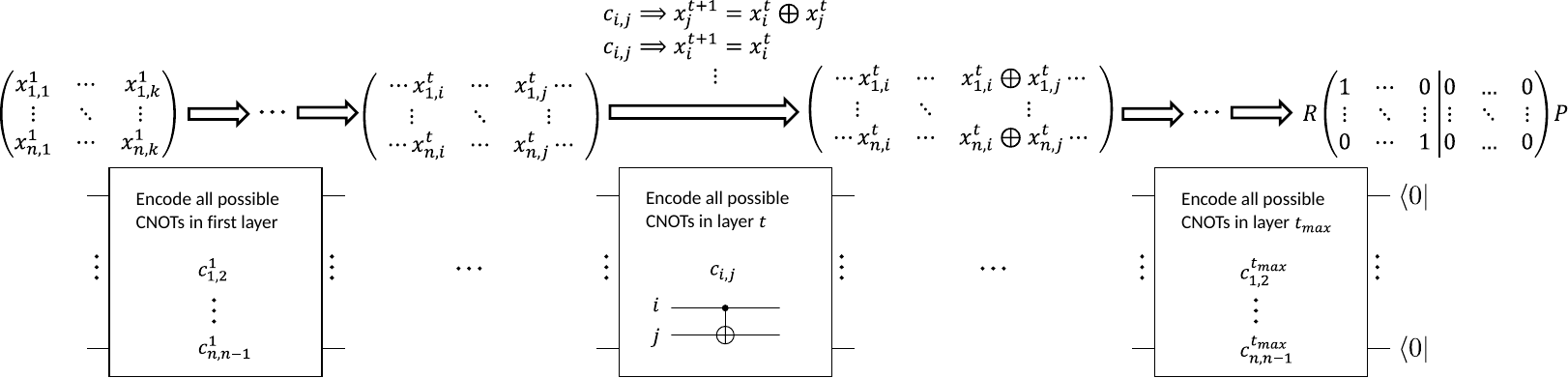}
  \caption{Symbolic encoding for optimal state preparation circuit synthesis. The circuit is synthesized using at most $t_\mathrm{max}$ layers of CNOTs. All possible column additions on the symbolic check matrices are related via variables $c_{i,j}^t$, which encode all possible CNOTs in layer $t$. Starting from the target check matrix $H_X$ these constraints encode all possible valid circuits that prepare the logical $\ket{0}^{\otimes k}_L$ state.}
  \label{fig:sp-encoding}
\end{figure*}

\subsection{Problem Description and Complexity} 
\label{sec:task-descr-compl}
Since the orthogonal complement of the $X$ check matrix $H_X \in \mathbb{F}_2^{m \times n}$ is uniquely defined and coincides with the space spanned by the rows of $H_Z$ and logical operators $\zl^i, i\in [k]$, $\ket{0}^{\otimes k}_L$ is uniquely fixed by the $X$ stabilizers of the code alone.
An important property of logical X or Z basis state preparation circuits of CSS codes is that they consist only of physical qubits initialized in the $X$ or $Z$ basis followed by a sequence of CNOTs. In fact, there is a one-to-one correspondence between such circuits and CSS codes~\cite{kissingerPhasefreeZXDiagrams2022}. 
If we have $m$ qubits initialized in $\ket{+}$ and $n-m$ qubits in $\ket{0}$ then the product state of these qubits has $m$ $X$-stabilizers and $n-m$ $Z$-stabilizers. Applying a CNOT to this state maps $X$-stabilizers to $X$-stabilizers and $Z$-stabilizers to $Z$-stabilizers. Therefore, any CNOT circuit applied to this initial product state does not change the number of $X$- or $Z$-stabilizers. Combined with the uniqueness of the orthogonal complement, this tells us that if the state obtained by applying a CNOT circuit to the initial product state is stabilized by the $X$-stabilizers of the code and no other $X$-stabilizers, it must necessarily also be stabilized by the $Z$-stabilizers and the logical $Z$-operators of the code and is, therefore, the $\ket{0}_L^{\otimes k}$ of the code.
These properties simplify the considered problems significantly because, for state preparation, we only have to consider $X$ stabilizers.

Any intermediate state during preparation is also a logical basis state of some CSS code with respective $X$ checks $H_X^a$, where $H_X^a$ denote the $X$-checks of the partial state after the first $a$ CNOTs.
If the $a+1$st CNOT has control $q_i$ and target $q_j$ then $H_X^{a+1}$ is obtained by adding column of $H_X^a$ $i$ onto column $j$ of $H_X^a$. This comes from the fact that a Pauli $X$ copies through the control of a CNOT.
A CSS state preparation circuit $\mathcal{C}$ on $n$ qubits, therefore, implements an invertible matrix $M_{\mathcal{C}}$ over $\mathbb{F}_2$ converting $H_X^0$ to the final $H_X$.

Recall that the rows of $H_X$ represent stabilizer generators for the $X$ stabilizers of the given code.
The choice of stabilizer generators is not unique in general. 
In particular, we can replace any two generators $\mathcal{X}_i, \mathcal{X}_j$ by equivalent generators $\mathcal{X}_i, \mathcal{X}_i\cdot \mathcal{X}_j$. 
We say that two check matrices $H_X$ and $H_X^\prime$ are equivalent if $H_X = R \cdot H_X^\prime$ for some invertible $\mathbb{F}_2^{m\times m}$ matrix $R$.

The task of CSS state preparation circuit synthesis, then, is to find a CNOT circuit $\mathcal{C}$ such that $H_XM_\mathcal{C} = R \cdot \begin{bmatrix} I & 0 \end{bmatrix}$ or, equivalently $H_X = R \begin{bmatrix} I & 0 \end{bmatrix}M_\mathcal{C}^{-1}$. 
Here, we assumed without loss of generality that the qubits initialized in $\ket{+}$ are the first $m$ qubits in the qubit ordering.
The second formulation of the synthesis problem suggests that instead of constructing the circuit by successively adding CNOTs until we obtain $H_X$, we can start from $H_X$ and apply CNOTs until we obtain the \emph{reduced} check matrix $\begin{bmatrix} A & 0 \end{bmatrix}$ where $A$ is some rank $\mathrm{rk}~(H_X)\times \mathrm{rk}~(H_X)$ matrix.
The state preparation circuit is then obtained by initializing the first $m$ qubits in the $\ket{+}$ state and applying the CNOTs used in the reduction in \emph{reverse} order.

Complexity-wise, generating optimal CNOT circuits is closely related to the minimum circuit size problem, which is unlikely to be proven to be in P or to be NP-complete~\cite{kabanetsCircuitMinimizationProblem2000}. Therefore, it is unlikely to find a polynomial algorithm for state preparation circuit synthesis since it is simply a variant of that problem. 
In the following, we describe how to convert an instance of the state preparation circuit synthesis problem to a SAT formula and obtain gate- and depth-optimal preparation circuits. 
Since this reduction to SAT is not expected to scale due to the intractability of the SAT problem, we also provide a polynomial-time heuristic algorithm for this problem.

\subsection{Optimal Approach}\label{sec:opt-state-prep}

We would like to find an optimal preparation circuit whenever possible. 
Fewer CNOTs naturally lead to lower error rates in any circuit, and shallower circuits lead to faster execution time, which in turn leads to a decrease in idling errors. 
Therefore, we want to synthesize \emph{gate}- and \emph{depth}-optimal circuits.

Since the synthesis problem is formulated over $\mathbb{F}_2$, SAT solvers~\cite{biereHandbookSatisfiability2009} are a natural choice for this task. SAT solvers have previously been employed to synthesize general stabilizer circuits~\cite{pehamDepthOptimalSynthesisClifford2023} and state preparation circuits in particular~\cite{schneiderSATEncodingOptimal2023}. 
However, these approaches provide a solution only for a few qubits. 
We can improve upon these SAT encodings by taking advantage of the fact that we are dealing with CSS codes.

The idea behind the encoding is sketched in \Cref{fig:sp-encoding}. Essentially, the task is to symbolically encode Gaussian elimination over $\mathbb{F}_2$. 
To this end, we first introduce \emph{matrix variables} $\{x_{i,j}^t \mid i \in [m], j \in [n], t \in [t_{\mathrm{max}}+1]\}$, where $t_{\mathrm{max}}$ is the given maximum number of elimination steps. 
The maximum number of elimination steps allows us to control how many CNOTs are applied at each step. 
Depending on the specific optimization task, we can use $t_{\mathrm{max}}$ to limit the total number of CNOTs or the depth of the circuit.

Initially, we assert 

$$\forall i \in [m].~\forall  j \in [n].~ x_{i,j}^1=H_X[i,j],$$

where $H_X[i, j]$ denotes the matrix entry in row $i$ and column $j$ of $H_X$.

For the reduced matrix, we assert the pseudo-boolean constraint

$$\left(\sum_{j=1}^{n} \llbracket \bigvee_{i=1}^{m} x_{i,j}^{t_{\mathrm{max}}+1}\rrbracket\right) = \mathrm{rk}~(H_X).$$

In other words, the matrix must be reduced to a matrix $H_X^{t_\mathrm{max}+1}$ with exactly $\mathrm{rk}~H_X$ columns with non-zero entries. 
This is equivalent to asserting that there exists an invertible matrix $R$ and a permutation matrix $P$ such that $R H_X^{t_\mathrm{max}} = P \begin{bmatrix} I_m & 0 \end{bmatrix}$. 
This constraint guarantees that the choice of stabilizer generators has no impact on the synthesized circuit's size.

Symbolic matrices in consecutive steps are obtained via column operations. 
Either a column is added onto another, or a column does not change. 
Therefore, we introduce two Boolean formulas

\begin{align*}
  \mathrm{add}(i,j, t) &= \forall k \in [m].~ x_{k,j}^{t+1}=x_{k, i}^t \oplus x_{k,j}^t \\
  \mathrm{id}(i, t) &= \forall k \in [m].~ x_{k, i}^{t+1}=x_{k,i}^t.\\
\end{align*}

No matter which of the two metrics we optimize for, obtaining the optimal state preparation circuit is achieved by solving the encoded formula for different values of $t_{\mathrm{max}}$ until two values $t_1, t_2$ are found such that

\begin{itemize}
\item The formula is unsatisfiable for $t_{\mathrm{max}} = t_1$ and
\item The formula is satisfiable for $t_{\mathrm{max}} = t_2$.
\end{itemize}

The optimal solution is then the one obtained from the variable assignment to the satisfiable instance.

In the following, we will give the details of the encoding for synthesizing depth- and gate-optimal circuits. 

\subsubsection*{Encoding for Depth-optimal Circuits}
\label{sec:encod-depth-optim}

For synthesizing depth-optimal circuits, we interpret $t_{\mathrm{max}}$ as the maximal circuit depth. 
We then need only to encode all possible layers of CNOTs at each reduction step.
If the solver finds a satisfying assignment, we can extract the circuit directly from the variable assignment. 
If the solver shows no such assignment exists, we can incrementally vary $t_{\mathrm{max}}$ until a satisfying assignment is found.

We introduce \emph{CNOT variables} \mbox{$\{c_{i,j}^t \mid i \in [n], j \in [n], t \in [t_{\mathrm{max}}+1]\}$}, which encode all possible CNOTs (or column additions) from control qubit (column) $i$ to target qubit (column) $j$. 
These variables encode how the symbolic matrices at steps $t$ and $t+1$ are related:

\begin{align}\label{eq:add-constraint}
  \begin{split}
  \forall t \in [t_{\mathrm{max}}].~\forall i \in [n].
  \forall  j \in [n]\setminus \{i\}.\\ c_{i,j}^t \implies \mathrm{add}(i, j, t).    
  \end{split}
  \end{align}

  Additionally, any qubit can be involved in at most one CNOT at one timestep:

$$\forall t \in [t_{\mathrm{max}}].~\forall i \in [n].~(\sum_{j=1}^{n}\llbracket c_{i,j}^t \rrbracket + \sum_{j=1}^{n}\llbracket c_{j,i}^t \rrbracket) \leq 1.$$

Lastly, if a qubit is not the target of any CNOT, the corresponding column does not change:

\begin{align*}
  &\forall t \in [t_{\mathrm{max}}].~\forall i \in [n]. \bigl(\neg \bigwedge_{j=1}^{n}c_{j,i}^t\bigr) \implies \mathrm{id}(i, t).
\end{align*}

\subsubsection*{Encoding for Gate-optimal Circuits}\label{sec:encod-gate-optim}

For the synthesis of gate-optimal circuits, instead of encoding entire layers of CNOTs, symbolic matrices in consecutive elimination steps are related by only a single column addition, ensuring that at every time step, only one CNOT can be performed. The maximum number of time steps, then, corresponds to the maximum number of CNOTs allowed in the circuit construction. 
This can be achieved by augmenting the depth-optimal encoding with further constraints prohibiting more than one CNOT occurring at every step as done in~\cite{schneiderSATEncodingOptimal2023}. 
Alternatively, these additional constraints can be avoided by encoding all possible control and target combinations for each CNOT instead of every possible CNOT. 
Using binary encoding for the control and target qubit indices, we require only $2\floor{\log_2n}$ instead of $n(n-1)$ variables per timestep.

Thus, we introduce \emph{bit-vector} variables $\{\mathrm{ctrl}^t \mid 1 \leq t \leq  t_{\mathrm{max}}\}$ and $\{\mathrm{tar}^t \mid 1 \leq t \leq  t_{\mathrm{max}}\}$ of size $\ceil{\log_2n}$. Since bit-vector logic is not supported by SAT solvers, the following formulation requires using a solver for \emph{Satisfiability Module Theories} (SMT) that supports bit-vector logic. We can then encode \emph{CNOT constraints} relating the control and target variables:

$$c_{i,j}^t \Leftrightarrow \left( \mathrm{ctrl}^t = i \land \mathrm{tar}^t=j \right).$$

We can symbolically encode the column addition constraints using \cref{eq:add-constraint}.

Encoding the fact that a column does not change if the corresponding qubit is not the target of a CNOT is simpler in this setting:

\begin{align*}
\forall t \in [t_{\mathrm{max}}].~\forall i \in [n]. \mathrm{tar}^t \neq i \implies \mathrm{id}(i, t).  
\end{align*}

Of course, a qubit cannot be both the target and control of a CNOT:

$$\forall t \in [t_{\mathrm{max}}]. \mathrm{ctrl}^t\neq \mathrm{tar}^t$$

Since the bit-vectors might take on values outside the allowed qubit range, we impose the following constraint:

$$\forall t \in [t_{\mathrm{max}}]. (1 \leq \mathrm{ctrl}^t \leq n) \land (1 \leq \mathrm{tar}^t \leq n).$$

With these constraints encoded, we can try to obtain a solution using publicly available SMT solvers like Z3~\cite{leonardodemouraZ3EfficientSMT2008}.

\subsection{Heuristic Approach}\label{sec:heur-state-prep}

CNOT and stabilizer circuit synthesis have been well researched in the past, and various asymptotically optimal algorithms exist~\cite{jiangOptimalSpaceDepthTradeOff2022,maslovLinearDepthStabilizer2007,patelOptimalSynthesisLinear2008}. 
For small circuits, the constant overhead in these constructions leads to unacceptably large circuits.

In Ref.~\cite{timotheegoubaultdebrugiereGaussianEliminationGreedy2021,schaefferCostMinimizationApproach2014}, the authors propose a greedy path-finding-based method for synthesizing linear circuits. 
For an input $\mathbb{F}_2$ matrix $H$ they define the cost functions

\begin{align*}
  h_{\text{sum}}(H)&=\sum_{i,j}H[i,j]\\
  h_{\text{prod}}(H)&=\sum_{i}\log(\sum_j H[i,j]),
\end{align*}

where addition is taken over $\R$ rather than $\F_2$.

The first cost function simply counts the number of non-zero entries in the matrix.
The idea behind the second cost function is that it weights columns that are almost completely zeroed higher.

We can use these cost functions to synthesize state preparation circuits by starting with the target check matrix $H_X$ and greedily choosing the CNOT that minimizes the cost of the matrix resulting from applying the CNOT, i.e., the corresponding column-addition.
This is repeated until the cost can no longer be minimized, either because the reduced check matrix corresponds to a product state of $\ket{0}$ and $\ket{+}$ states or because the search is stuck in a local minimum.
In Ref.~\cite{timotheegoubaultdebrugiereGaussianEliminationGreedy2021}, an iteration limit is set before stopping the calculation to avoid getting stuck in a local minimum indefinitely. 
In our case, we can use the fact that the stabilizers form a group by changing the basis of the row space to escape local minima when the synthesis algorithm is stuck. 
To this end, we can also greedily pick the row additions that remove the most ones~\footnote{In principle, these reductions can be tried between every column operation to reduce the number of 1s in the matrix even faster. 
We have not observed a definite benefit of doing this as it sometimes actually leads to worse solutions.}. 
If even this approach fails, we can always fall back to doing Gaussian elimination on the rows of the matrix.
We make further use of this degree of freedom by preprocessing the matrix.
The synthesized circuit will probably be quite deep if a matrix column has substantially more non-zero entries than other columns. 
This can be prevented by applying row operations that decrease the maximum column weight without decreasing the row weights.

Refs.~\cite{timotheegoubaultdebrugiereGaussianEliminationGreedy2021,schaefferCostMinimizationApproach2014} also propose to use the combined cost of $H$ and $H^{-1}$ to guide the search.
Since the matrices, in our case, are not invertible, this is not possible.
Note that reducing the check matrix using such heuristics is a similar idea to the approach proposed in Ref.~\cite{paetznickFaulttolerantAncillaPreparation2013} where the \enquote{overlap} of stabilizers is used to reduce the CNOT count in state preparation circuits.

This greedy search strategy seems more promising for the smaller codes we focus on. 
Using the described cost functions for a heuristic search will optimize for gates, but we also want to optimize for depth as much as possible. 
A simple heuristic for this is to construct the circuit layer by layer, always greedily choosing CNOTs that minimize the cost function until all qubits are involved in a gate or until no possible CNOT would decrease cost anymore.

\section{Synthesis of Verification Circuits}\label{sec:ver-circ}

State preparation circuits like the ones we synthesize above are generally not fault-tolerant. 
The key idea to ensure fault tolerance is to append a verification circuit to the preparation circuit that indicates whether an error has propagated in the circuit and post-selecting on the verification measurement outcomes. 
More specifically, a verification circuit is the circuit implementation of a specific set of stabilizer or logical operator measurements of the state such that any propagated error anticommutes with one of the verification measurements. An error anticommutes with a stabilizer measurement if their support overlaps on an \emph{odd} number of qubits. Thus, the propagated errors of the state preparation circuit impose a parity constraint on the measurements used in the verification circuit.

In the following, we will describe the problem of synthesizing verification circuits in detail. In particular, we describe the challenges that arise when generalizing verification circuit synthesis to codes with distances larger than 3. We show how the verification circuit synthesis problem can be broken down into synthesizing a sequence of verification circuits that verify a specific set of errors---significantly reducing the complexity of the problem.

To synthesize each subcircuit, we show how to encode the synthesis problem into a SAT formula. Iterative queries to a SAT solver can then again be used to find the gate- or ancilla-optimal sub-verification circuit. As with the optimal synthesis of state preparation circuits, there will be limits to scaling such an optimal synthesis method. We, therefore, also propose a scalable, greedy synthesis scheme.

\subsection{Problem Description and Complexity}
\label{sec:task-descr-compl-ver}

A simple way to synthesize a verification circuit is to measure all stabilizer generators, which might include many superfluous measurements.
Note again that since we prepare logical basis states, the respective logical operators are also stabilizers of the state.

Define the $i$-\emph{fault set} $\mathcal{E}_i(\mathcal{C})$ of a circuit $\mathcal{C}$ as the set of propagated errors $E$ with $\mathrm{wt}(E)>i$ that are caused by the propagation of an error of probability order $i$. 
Computing this set itself takes $O(\#\mathrm{CNOTs}^{i})$ time and is therefore fixed-parameter tractable in the number of CNOTs. 
Since we will mainly look at small codes with a small number of CNOTs, computing this set is straightforward.

Let us continue to focus on the logical $\ket{0}^{\otimes k}_L$ state. Since we are looking at CSS codes, $X$ and $Z$ errors can be checked separately. Let us focus on verifying $X$ errors for now, i.e., the verification circuit construction described in the following will only be strictly fault-tolerant against $X$-type errors.
How this construction can be extended to be strictly fault-tolerant for general errors will be shown in \Cref{sec:fully-fault-tolerant}.
To check for $X$ errors, any element of $S_Z$ (including logical operators) can be measured in general. 
Searching for a minimally-sized set of measurements seems infeasible considering the size of this group. This is stated by the following theorem.

\begin{theorem}\label{thm:np-grp}
  Consider the following decision problem: Given an X fault set $\mathcal{E}$, Z stabilizer generators $Z_1, \cdots, Z_m$ and an integer $k$, is there a set of group elements $Z_{\mathrm{ver},1}, \cdots, Z_{\mathrm{ver},t}$ such that $\sum_{i=1}^{t} \mathrm{wt}(Z_{\mathrm{ver},i}) \leq k$ and for all $E \in \mathcal{E}$ there is a $Z_{\mathrm{ver},i}$ such that $E\cdot Z_{\mathrm{ver},i} = - Z_{\mathrm{ver},i}\cdot E$? This problem is NP-complete.
\end{theorem}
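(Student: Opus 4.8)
The plan is to prove NP-completeness by (1) showing membership in NP, and (2) giving a polynomial-time reduction from a known NP-complete covering problem. Membership is easy: a certificate is the list of group elements $Z_{\mathrm{ver},1},\dots,Z_{\mathrm{ver},t}$ given as binary vectors in $\mathbb{F}_2^n$; we can verify in polynomial time that each lies in the row span of the $Z_i$ (by Gaussian elimination over $\mathbb{F}_2$), that $\sum_i \mathrm{wt}(Z_{\mathrm{ver},i}) \le k$, and that for every $E \in \mathcal{E}$ some $Z_{\mathrm{ver},i}$ has odd overlap with $E$ (the anticommutation condition is just $E \cdot Z_{\mathrm{ver},i}^T = 1$ over $\mathbb{F}_2$). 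Note $|\mathcal{E}|$ is part of the input, so this is polynomial in the input size.

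For hardness, I would reduce from \textbf{Minimum Set Cover} (or equivalently a weighted/cardinality variant). The intuition: each fault $E \in \mathcal{E}$ is an "element to be covered," and choosing a verification operator $Z_{\mathrm{ver}}$ that anticommutes with $E$ "covers" $E$; the weight $\mathrm{wt}(Z_{\mathrm{ver}})$ plays the role of the cost of a set. The subtlety is that we cannot pick the anticommutation pattern of each candidate operator freely — it is dictated by the geometry of the $Z_i$ and the $E$'s. So the real work is to \emph{engineer} an instance: given a set cover instance with universe $U = \{e_1,\dots,e_N\}$ and sets $S_1,\dots,S_M \subseteq U$, construct $n$, the fault set $\mathcal{E}$, and the generators $Z_1,\dots,Z_m$ so that the available "cheap" group elements correspond exactly to the $S_j$ and each covers precisely the faults indexed by the members of $S_j$. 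A natural construction: use one qubit (coordinate) per element $e_i$ so that fault $E_i$ is supported on coordinate $i$; introduce generators whose supports, and $\mathbb{F}_2$-combinations thereof, realize exactly the incidence vectors of the sets $S_j$ on those coordinates, while making any operator with a "useful" nonzero pattern on the element-coordinates cost at least $\mathrm{wt}(S_j)$ (padding with auxiliary coordinates to penalize spurious combinations). Then a collection of verification operators of total weight $\le k$ covering all of $\mathcal{E}$ exists iff a set cover of the corresponding total cost exists.

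I would then verify the reduction is polynomial-time (the construction writes down $O(\mathrm{poly}(N,M))$ bits) and prove both directions of correctness: a set cover yields verification operators (take the operators realizing the chosen $S_j$; their supports are the $S_j$'s incidence vectors, so they cover all faults and have the right total weight), and conversely any valid verification family can be massaged into a set cover (each operator in the family that covers some fault must, by the cost-penalization gadget, be one of the designated $S_j$-operators up to the padding, and the set of such $j$ covers $U$ within budget). The hard part will be the gadget design in step (2): ensuring that the \emph{only} low-weight group elements with any odd overlap on the fault coordinates are exactly the intended ones — i.e., ruling out "accidental" cheap linear combinations of the $Z_i$ that could cover faults more efficiently than any legitimate set. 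This typically requires carefully chosen disjoint auxiliary blocks attached to each generator (or to each set) so that combining generators in unintended ways inflates the weight beyond any threshold the reduction cares about, while the orthogonality constraint $H_X H_Z^T = 0$ is respected (one can, e.g., take $H_X$ empty or trivial, since the theorem only references the $Z$ generators and the fault set, so CSS-consistency imposes no real obstruction here).
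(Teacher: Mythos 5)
Your NP-membership argument is fine. The genuine gap is in the hardness half: the entire content of the theorem is the reduction, and you leave its key ingredient---the gadget guaranteeing that the \emph{only} low-weight elements of $\langle Z_1,\dots,Z_m\rangle$ with odd overlap on the fault coordinates are the designated ``set'' operators---as an acknowledged open step. This is not a routine detail to defer. Because candidate measurements form a group, ``covering'' behaves like symmetric difference rather than union: the element corresponding to the sum of two designated operators detects exactly the faults covered by \emph{exactly one} of the two sets, and its weight can be far smaller than the sum of the individual weights because supports cancel. So an adversarial solution can exploit cheap accidental combinations, and your auxiliary padding blocks would have to simultaneously (i) inflate the weight of every unintended $\mathbb{F}_2$-combination beyond the relevant budget, (ii) keep the intended operators' weights faithfully tied to the set costs, and (iii) support both directions of the equivalence (in particular the ``massaging'' of an arbitrary valid verification family back into a set cover relies precisely on the unproven gadget property). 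Until that construction is written down and verified, the reduction from \textsc{Set Cover} is a plan rather than a proof; the parity structure you are fighting is exactly why the paper's heuristic section has to work with \textsc{Set Cover with Symmetric Difference} rather than plain \textsc{Set Cover}.

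The paper takes a different and much shorter route that avoids any gadget engineering: it observes that the problem already \emph{contains} as a special case (a single fault, a single measurement) the problem of finding a minimum-weight representative of a coset of a binary linear code---the stabilizer elements anticommuting with a fixed error $E$ form a coset of the subgroup of elements commuting with $E$---i.e., the coset-leader/maximum-likelihood-decoding problem, which is NP-complete by Berlekamp, McEliece, and van~Tilborg. Hardness then transfers immediately. If you want to complete your argument, you must either finish the weight-control gadget in full, or, more economically, change the source problem from \textsc{Set Cover} to coset-leader decoding, where the embedding into the verification-synthesis problem is direct and no gadget is needed.
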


\begin{proof}
  The problem is clearly in NP, since given a certificate for an instance it is straightforward to verify that it is a valid solution. Furthermore, this is just a more general version of the Coset leader problem, which asks for a minimum-weight representative of the coset of some linear code. Computing such minimal-weight representatives is NP-complete~\cite{berlekampInherentIntractabilityCertain1978} and is essentially a decoding problem. 
\end{proof}

Consider an $\llbracket n, k, d \rrbracket$ code which protects against up to $t = \floor{\frac{d-1}{2}}$ errors and a state preparation circuit $\mathcal{C}$ for the $\ket{0}^{\otimes k}_L$ of this code. Generating a verification circuit for all $X$ errors $\bigcup_{i=1}^{\floor{\frac{d-1}{2}}} \mathcal{E}_i(\mathcal{C})$ does not ensure fault-tolerance in and of itself because errors can also occur in the verification circuit itself. We distinguish the following cases.

\begin{enumerate}
\item $t$ errors occur in the non-fault-tolerant state preparation circuit and propagate to an \emph{uncorrectable} error $E_\mathcal{C} \in \mathcal{E}_t$ at the output.
\item $i < t$ errors occur in the non-fault-tolerant state preparation circuit and propagate to a \emph{correctable} error $E_\mathcal{C} \in \mathcal{E}_i, E_\mathcal{C} \notin \mathcal{E}_t$ at the output.
\item $i < t$ errors occur in the non-fault-tolerant state preparation circuit and propagate to an \emph{uncorrectable} error $E_\mathcal{C} \in \mathcal{E}_i \cap \mathcal{E}_t$ at the output.
\end{enumerate}

\begin{figure}[t]
    \centering
       \captionsetup{
    labelfont=bf,        %
    justification=centerlast,
}
    \centering
    \includegraphics[width=.4\textwidth]{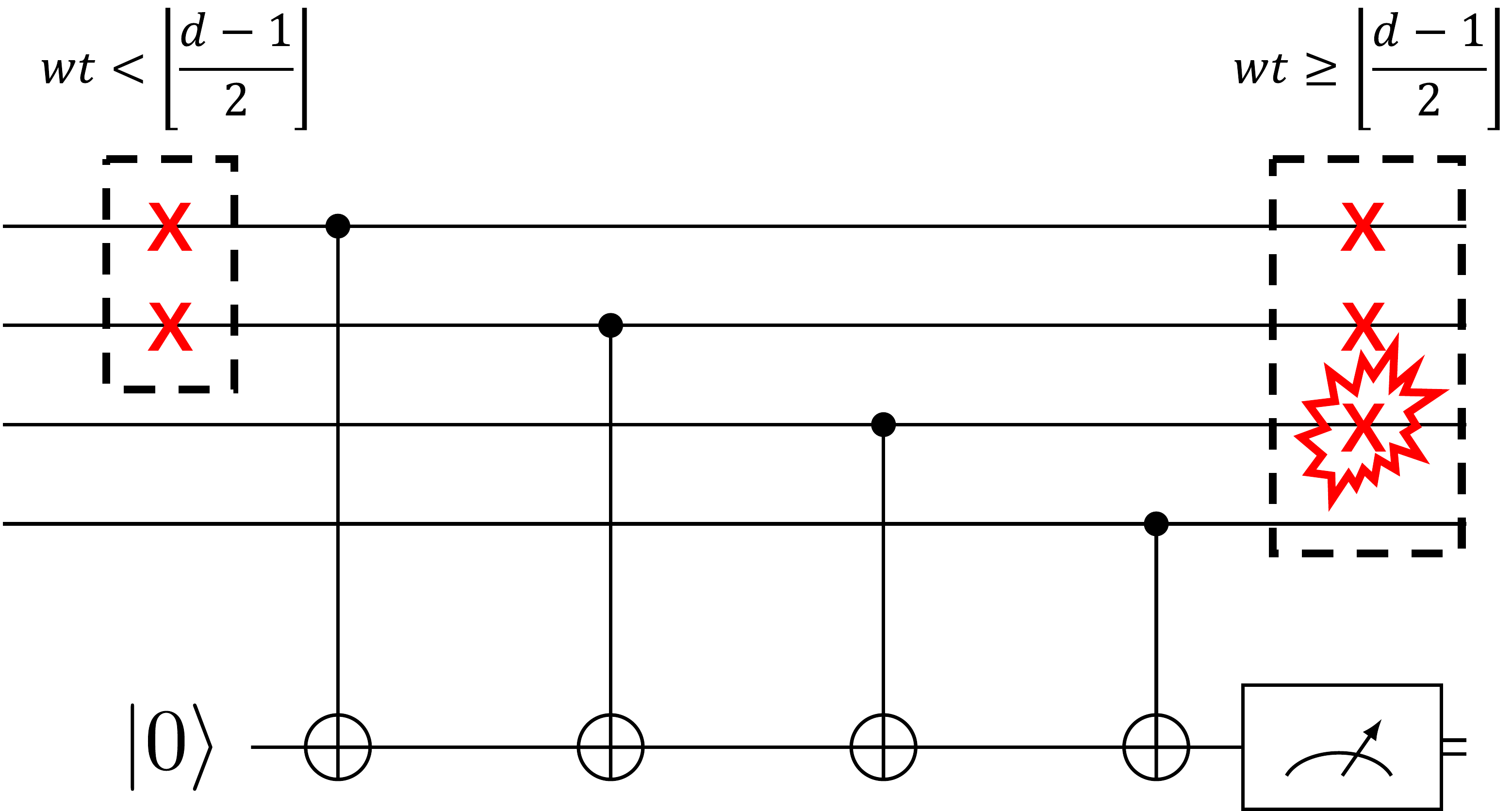}
  \caption{A correctable error is (correctly) not detected by the verification circuit, but an error after the stabilizer measurement turns the error into an uncorrectable one. If this happens after the last measurement, which would detect this error, it goes by undetected. }\label{fig:error-case-1}
\end{figure}

The first case will always be detected by the verification circuit, assuming that at most $t$ errors occur in the entire state preparation circuit.
In the second case, if another error $E_\mathrm{Ver}$ of weight $\mathrm{wt}(E_\mathrm{Ver})=t-i$ occurs in the verification circuit, then it could be that \mbox{$E_\mathcal{C} \cdot E_\mathrm{Ver} \notin \mathcal{E}_i$}. 
If this error is not in the fault set $ \mathcal{E}_t$, then this error is equivalent to some non-propagated error. 
If it is in the fault set, however, it should be detected by some measurement in the verification circuit.
But it could be that $E_\mathrm{Ver}$ happens after every measurement that would detect this combined error. Then, the state will not be discarded, and the fault-tolerance condition of~\Cref{def:ft} is violated(see also~\Cref{fig:error-case-1}).

In the third case, the verification circuit would detect the error, but if no more than $t-i$ measurements detect this error, then all these measurements could be flipped with at most $t-i$ independent errors. We say that the error is \emph{masked}. 
The verification circuit would then indicate no error, and~\Cref{def:ft} is violated again (see~\Cref{fig:error-case-2}).

For distance $3$ codes for which fault-tolerant verification circuit-based state preparation schemes exist, this is not much of an issue since in cases $2$ and $3$, the state preparation circuit would be error-free, i.e., $i=0$.
Any weight $w$ $X$ error in the verification circuit can lead to, at most, a weight $w$ $X$ error on the data qubits since the CNOTs of the measurement circuit only propagate $X$ errors onto ancilla qubits, so case $2$ is unproblematic. Even an error on a measurement ancilla will only trigger one of the verification measurements, leading to the state being discarded. For distance $3$ codes, the verification circuit can, therefore, be treated as error-free without penalty.

Greater care must be put into synthesizing the verification circuit for larger distances. 
The case distinction above already suggests the solution. 
The verification circuit needs to detect not only uncorrectable propagated errors but correctable ones, too, as these might lead to a logical error together with additional errors in the rest of the circuit. 
Let $\mathrm{Ver}_i$ be a verification circuit for $\mathcal{E}_i$.
We achieve fault-tolerance against one type of error ($X$ or $Z$) with the verification circuit 
\begin{equation}
  \label{eq:ver}
\mathrm{Ver} = \mathrm{Ver}_t  \circ \mathrm{Ver}_{t-1} \circ \cdots \circ \mathrm{Ver}_1,  
\end{equation}
where $\circ$ denotes circuit composition.

As before, if any measurement in this circuit indicates an error, the state is discarded, and the process is started anew. 
This ensures fault tolerance by the following reasoning:

Intuitively, if a layer in the verification circuit fails to detect a propagated error due to errors in the verification circuit itself, the resulting error will either be detected by a later verification or the error is equivalent to an unpropagated error.

More precisely, let $E$ be a propagated error with $\mathrm{wt}(E)>i$ caused by $i < t$ errors in the state preparation circuit.
By construction, $E$ will be detected by $\mathrm{Ver}_i$.
Observe that a single-qubit $X$ error on a qubit in the $\ket{+}$ state at the beginning of the circuit will propagate to a stabilizer operator of the code and act trivially on the prepared state.
As a result, if $E \in \calE_i(\calC)$ then $$E \in \bigcap_{j=i}^{\mathrm{wt}(E)}\calE_j(\calC),$$ meaning that if $E$ is the result of $i$ propagated errors in the state preparation circuit, it could also have been the result of $i+1, \cdots, \mathrm{wt}(E)$ errors (the additions errors would have been trivial). 
This means that $E$ will also be detected by each $\mathrm{Ver}_{i+1}, \cdots, \mathrm{Ver}_{\mathrm{wt}(E)}$.

The only way $E$ can pass verification undetected, then, is that further errors in the verification circuit mask it.
Any such error is either a single-qubit error on the data qubits or a single-qubit error on an ancilla qubit.
A two-qubit $X$ error on a CNOT is equivalent to a single-qubit $X$ error on the control before the CNOT, so we do not have to give special consideration to this case.

From the perspective of $\mathrm{Ver}_{i+1}$, it is immaterial whether an error on the data qubits occurs in some previous verification layer or at the end of the state preparation circuit.
This is because it does not matter if the combined error is detected by previous verification layers or not, as it will definitely be detected by $\mathrm{Ver}_{i+1}$.
Generally for an error $E'$ resulting from $u$ independent single-qubit errors on data qubits such that $\mathrm{wt}(E\cdot E') > i+u$ it is the case that $E\cdot E' \in \calE_{i+u}(\calC)$ and thus $E \cdot E'$ will be detected by $\mathrm{Ver}_{i+u}$.

Similarly, an error on an ancilla qubit is indistinguishable from a trivial error on the data qubits when considering the verification circuit $\mathrm{Ver}_{i+1}$.
Again, if an error $E'$ resulting from $v$ independent errors on ancilla qubits occurs and $\mathrm{wt}(E) > i+v$, then $E \in \calE_{i+v}(\calC)$ and thus $E\cdot E'$ will be detected by $\mathrm{Ver}_{i+v}$.

Combining the two cases, for an error $E'$ resulting from $u$ independent single-qubit errors occur on data qubits and $v$ single-qubit errors on ancilla qubits in the verification circuit such that $i+u+v \leq t$ and $\mathrm{wt}(E \cdot E') > i+u+v$ then $E\cdot E' \in \calE_{i+u+v}(\calC)$ and thus $E\cdot E'$ will be detected by $\mathrm{Ver}_{i+u+v}$.
Therefore, the circuit $\mathrm{Ver}$ is fault-tolerant.

\begin{figure}[t]
    \centering
    \captionsetup{
    labelfont=bf,        %
    justification=centerlast,
}
    \includegraphics[width=.4\textwidth]{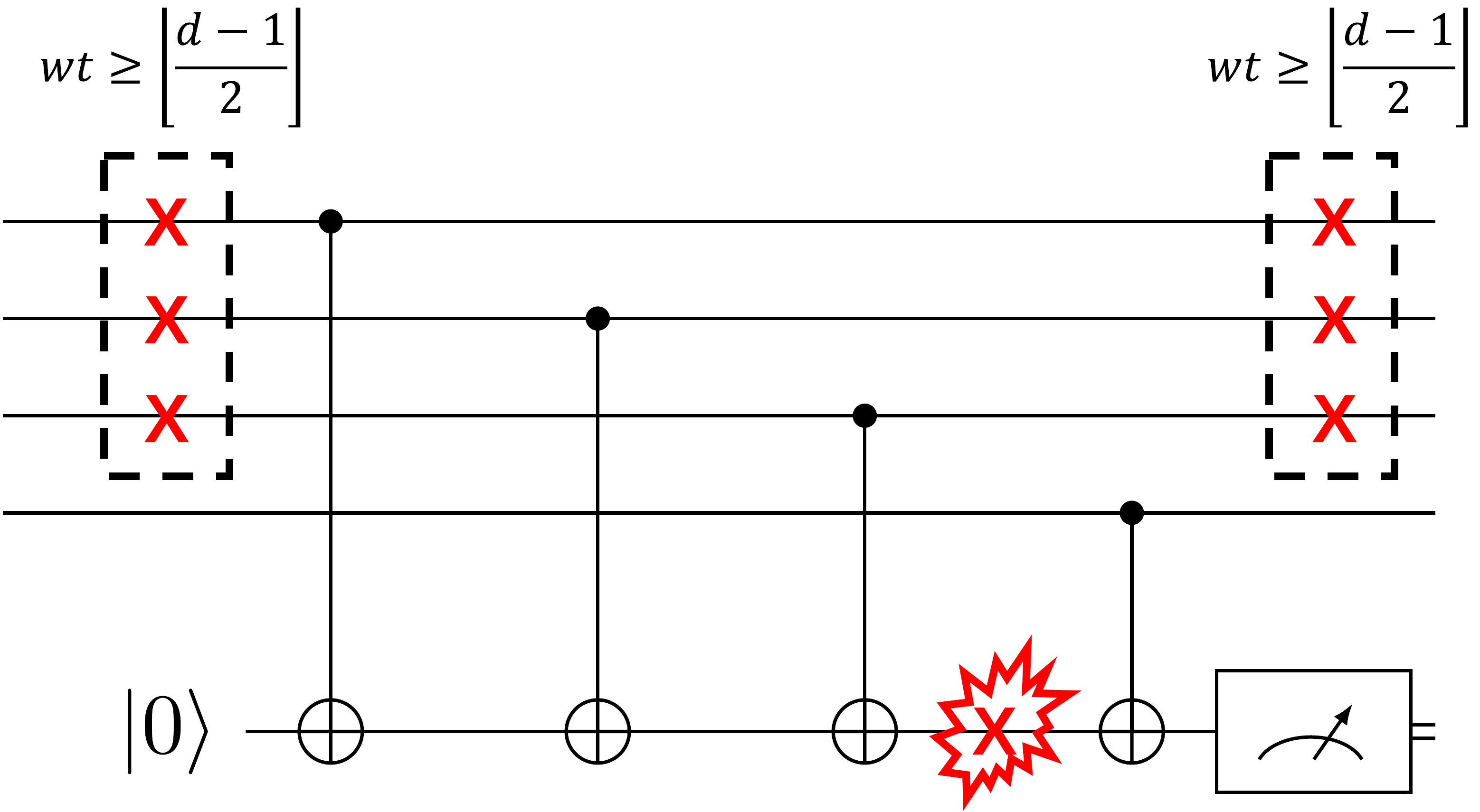}
    \caption{An uncorrectable error would be detected by the verification circuit, but an error on the measurement ancilla flips the $X$ parity, causing the measurement to falsely indicate no error (masked). If this happens after the last measurement that would detect this error, it goes by undetected.}\label{fig:error-case-2}
\end{figure}

The layered construction of the verification circuit simplifies the problem of verification circuit synthesis significantly. 
Instead of synthesizing the verification circuit wholesale, every layer can be constructed individually. Every $\mathcal{E}_i$ gives rise to a new instance of the verification circuit synthesis problem. 
For every $\mathcal{E}_i$ we have to find a set of stabilizer measurements such that every error in $\mathcal{E}_i$ anticommutes with at least one of these measurements.

This layered approach significantly simplifies reasoning about verification circuit constructions by breaking the synthesis down into the synthesis of smaller sub-problems.
This, however, potentially comes at the price of the ``global'' optimality of the full circuit.
Since the different fault sets of the preparation and verification circuits are not necessarily independent, synthesizing the verification circuits separately might lead to certain errors redundantly being verified multiple times.
Hence, in this work, we only consider optimal constructions for each subcircuit.

Next, we will show optimal and heuristic algorithms to tackle these individual synthesis problems.

\subsection{Optimal Approach}\label{sec:optim-verif-circ}
We can frame the optimal verification circuit synthesis problem again as a SAT problem with the following inputs:

\begin{itemize}
\item $\mathcal{C}$: a state-preparation circuit over $n_q$ qubits.
\item $H_X \in \mathbb{F}_2^{m \times n}$: matrix representation of $m$ $X$ stabilizer generators of $\calC$.
\item $H_Z \in \mathbb{F}_2^{(n-m)\times n}$: matrix representation of $n-m$ $Z$ stabilizer generators of $\mathcal{C}$.
\item $n_\mathrm{anc}$: maximum number of ancillas (measurements) used in the verification circuit.
\item $n_\mathrm{CNOT}$: Maximum number of CNOTs used in the verification circuit.
\end{itemize}

Let $\mathcal{E}(\calC)$ denote the set of uncorrectable errors arising from propagating up to $\floor{\frac{d}{2}}$ independent $X$errors and let $E_i$ denote the $\mathbb{F}_2^n$ representation of the $i$-th error in $\mathcal{E}(\calC)$.
The $Z$-generators $\mathcal{Z} = \{Z_1, \cdots, {Z}_{n-m}\}$ form a basis of the $Z$-stabilizers, so we can express any verification circuit as a set of $\mathbb{F}_2^{n-m}$ vectors indicating the contributing generators. 
We introduce the Boolean variables

\[
  \{\mathrm{v}_s^i \mid i \in [n_\mathrm{anc}], s \in [n-m]\}.
  \]

  An assignment to these variables denotes the $i$-th measurement as a linear combination of the $n-m$ $Z$-stabilizer generators.

  The $j$-th entry of the $i$-th measurement can be symbolically computed via

  \[
    \mathrm{meas}^i_j = \left( \bigoplus_{s=1}^{n-m} v_s^i\land H_Z[s, j]\right).
  \]

  Assuming that $H_Z$ is sparse, it is inefficient to convert the $Z$ stabilizers to Boolean constants and then perform these nested XOR operations. 
  Since $a \oplus 0 = a, a \land 1 = a$ and $a\land 0 = 0$, we can simplify the computation by evaluating constants in Boolean functions before encoding the SMT formula. 
  Such simplifications have been shown to lead to dramatic speedups in SMT solving~\cite{shuttyDecodingMergedColorSurface2022}.

  Next, we need to ensure that every error is detected. 
  Recall that an error is detected if it overlaps with a measurement on an odd number of qubits.
  Instead of computing the overlap on all qubits, we can again simplify this constraint by only considering qubits where an error has support. 
  Every error $E_k$ gives rise to a function

  \[
    \mathrm{detected}^{(k, i)} = \left(\bigoplus_{j\in \mathrm{supp}(E_k)}\mathrm{meas}^i_j\right).
  \]

  Every error needs to be detected by at least one measurement, so we impose

  \[
    \bigwedge_{k=1}^{|\mathcal{E}(C)|} \Bigl(\bigvee_{i=1}^{n_\mathrm{anc}}\mathrm{detected}^{(k, i)}\Bigr).
  \]

  We can furthermore restrict the overall number of CNOTs used in the verification circuit with the pseudo-boolean constraint:
  \[
    \Bigl(  \sum_{i=1}^{n_{\mathrm{anc}}}\llbracket \sum_{j=1}^{n_q} \llbracket \mathrm{meas}^i_j \rrbracket  \rrbracket \Bigr) \leq n_\mathrm{CNOT}.
  \]

  Since this encoding takes the maximal number of ancillae and the maximal number of CNOTs as parameters, we can iterate on both parameters to find either ancilla- or gate-optimal verification circuits. 

\subsection{Heuristic Approach}\label{sec:heur-verif-circ}

A part of the difficulty of computing verification circuits is the size of the stabilizer group from which measurements need to be chosen.
Instead of searching through the entire group, we can try to find a suitably large set of \emph{candidate} measurements and try to find a good subset as a verification circuit.
Indeed, if more than one measurement is required to verify a given fault set, the main difficulty is identifying a small set of verification measurements.
The requirement of finding a small set of measurements such that every error is detected by at least one measurement is similar to the \textsc{Set Cover}~\cite{karpReducibilityCombinatorialProblems1972} problem, which is defined as follows:

\begin{problem}
  \problemtitle{\textsc{Set Cover}}
  \probleminput{A set $U$, a collection of subsets $P \subseteq \calP(U)$ and an integer $u$}
  \problemquestion{Is there a set $\mathrm{Cov} \subseteq P$ with $|\mathrm{Cov}| \leq u$ and $\bigcup_{A \in \mathrm{Cov}} A = U$}
\end{problem}

The difference in our setting is that the number of potential stabilizers (covering sets) might be very large. 
We therefore consider this modified version of \textsc{Set Cover}:

\begin{problem}
  \problemtitle{\textsc{Set Cover with Symmetric Difference}}
  \probleminput{A set $U$, a collection of subsets $P \subseteq \calP(U)$ and an integers $u$ and $v$}
  \problemquestion{Define $P_1 = P$ and $P_k = P_{k-1} \cup \{A \bigtriangleup B \mid A, B \in P_{k-1}\}$, where $\bigtriangleup$ denotes the \emph{symmetric difference} of sets. Find $\mathrm{Cov} \subseteq  P_v$ such that $\bigcup_{A \in \mathrm{Cov}}A = U$ and $|\mathrm{Cov}| \leq u$.}
\end{problem}

By identifying $S \in \mathcal{Z}$ with the set $E(S, \calE) = \{F \in \mathcal{E} \mid S\cdot F = -F \cdot S\}$, for two stabilizers $S_1, S_2 \in \mathcal{Z}$ we have $E(S_1\cdot S_2, \calE) = E(S_1, \calE)\bigtriangleup E(S_2, \calE)$, i.e., any error in $\calE$ that anticommutes with both $S_1$ and $S_2$ will not anticommute with their product and therefore will not be detected by the respective measurement.
With this translation, we can phrase an instance of verification circuit synthesis as an instance of the above-defined modified \textsc{Set Cover} problem.

The straight-forward greedy algorithm for \textsc{Set Cover} approximates the optimum quite well~\cite{chvatalGreedyHeuristicSetCovering1979}, so we can use a similar strategy for synthesizing a verification circuit.
The idea is to start with the stabilizer generators and iteratively expand the set of candidate measurements by building products of stabilizers in the candidate set.
In each iteration, we can try to find a small set of stabilizers such that all problematic errors are detected.
If increasing the candidate set does not lead to a verification circuit with fewer stabilizer measurements, we stop.
Alternatively, if the size of the candidate set reaches some pre-defined limit, we can take the best solution found until this point and terminate.

This greedy algorithm is formalized in~\Cref{alg:heur-ver}. This algorithm assumes a function \verb|set_cover(U, P)| that computes a solution to the \textsc{Set Cover} problem for the sets $U$ and $P \subseteq \calP$ which can be done in polynomial time using the greedy algorithm.

\begin{algorithm}
\caption{Heuristic Verification Circuit}\label{alg:heur-ver}
\KwData{Set of errors $\mathcal{E}$, $Z$-stabilizer generators $\mathcal{Z}$, maximal candidate set size $c_\mathrm{max}$}
\KwResult{Set of stabilizer measurements $\mathcal{Z}_\text{ver}$ that detect $\mathcal{E}$}
$P \gets \{E(S, \mathcal{E}) \mid S \in \mathcal{Z}\}$\;
$\mathrm{Cov} \gets $ \texttt{set\_cover}$(P, \calE)$\;
\While{$|P| \leq c_\mathrm{max}$} {
  $\mathrm{Cov}_\text{new} \gets $ \texttt{set\_cover}$(P, E(S))$\;
  \If{$|\mathrm{Cov}_\text{new}| = \mathrm{Cov}$} {
    break\;
  }
  $P \gets P \cup \{A \bigtriangleup B \mid A, B \in P\}$\;
  $\mathrm{Cov} \gets \mathrm{Cov}_\text{new}$\;
}
$\mathcal{Z}_\text{ver} \gets$ Stabilizers corresponding to $\mathrm{Cov}$\;
\Return $\mathcal{Z}_\text{ver}$\;
\end{algorithm}

Let us make two remarks about the implementation of~\Cref{alg:heur-ver}:

\begin{itemize}
\item When solving \textsc{Set Cover} in lines $2$ and $4$ there might be multiple choices for every greedy choice of set. 
In this case, we preferably pick the set that corresponds to lower-weight stabilizers.
\item Extracting the stabilizers from the set cover in line $9$ can be done efficiently by maintaining a map from sets to stabilizer products. 
\end{itemize}

In general, the candidate set can become large very quickly as, in the worst case, it doubles in size during every iteration of the while-loop. 
For this reason,~\Cref{alg:heur-ver} takes the parameter $c_\mathrm{max}$ to limit the growth of this set.

By design,~\Cref{alg:heur-ver} only aims at minimizing the number of measurements in the verification circuit and not the weight of those measurements.
This approach is appropriate when minimizing the overall number of measurements but might result in a few high-weight stabilizer measurements. 
This problem can be somewhat mitigated by completely removing high-weight measurements from the set of measurement candidates.
However, a few high-weight measurements could actually be less costly in terms of the number of CNOTs than many low-weight measurements. 
The preferred choice might, for example, depend on the noise characteristics of the respective hardware under consideration.

After an appropriate set of stabilizers has been found using~\Cref{alg:heur-ver}, we can try to minimize the weight of these stabilizers. 
If there are stabilizers $\mathcal{Z}_\mathrm{red} = \{S \in \langle \mathcal{Z} \rangle \mid \mathcal{E}(S, \mathrm{E}) = \emptyset\}$, i.e., stabilizers that commute with all errors, then they form a group $\langle \mathcal{Z}_\mathrm{red} \rangle$. 
Given a verification stabilizer $S_\mathrm{ver}$ and any $S_\mathrm{red} \in \langle \mathcal{Z}_\mathrm{red} \rangle$ we can replace $S_\mathrm{ver}$ by $S_\mathrm{ver}\cdot S_\mathrm{red}$ in the verification circuit without changing which errors are detected. %
Another way to phrase this is to say that replacing $S_\mathrm{ver}$ by any element in the coset $S_\mathrm{ver}\cdot \langle  \mathcal{Z}_\mathrm{red} \rangle$ yields a valid verification circuit.
Give the measurements output by~\Cref{alg:heur-ver} we can therefore try to find the minimal representatives in their respective cosets, i.e., the so-called coset leader.
As noted above, computing the minimal-weight representatives of these cosets is NP-hard~\cite{berlekampInherentIntractabilityCertain1978}. 
Apart from constructing the standard array (i.e., listing all coset elements), there are methods using techniques from computer algebra~\cite{borges-quintanaComputingCosetLeaders2010,borges-quintanaComputingCosetLeaders2014}, which still require exponential memory in the worst case~\cite{mayrComplexityResultsPolynomial1997}. 
As a post-processing step, we can apply a heuristic search for the coset leaders of our verification stabilizers. 
One can also relegate this step to an SMT solver. 
For small codes, this can be expected to be efficient even if the global optimal verification circuit cannot be computed with the optimal method in an adequate amount of time.

\section{Fully Fault-Tolerant Non-Deterministic State Preparation}
\label{sec:fully-fault-tolerant}

Until now, we have only considered verification circuits for one type of error, i.e., $X$ errors in preparing the $\ket{0}_L^{\otimes k}$. But even though $Z$ errors cannot cause a logical error on $\ket{0}_L^{\otimes k}$, propagating $Z$ errors through the state preparation circuit still violates the fault-tolerant requirement of \Cref{def:ft}. A straightforward fix to this is generating verification circuits $\mathrm{Ver}_X$ and $\mathrm{Ver}_Z$ according to \Cref{sec:ver-circ} that check for $X$ and $Z$ errors respectively and combine them to the verification circuit $\mathrm{Ver} = \mathrm{Ver}_X \circ \mathrm{Ver}_Z$. This does not work however, because $X$ ($Z$) errors on one of the ancillae in $\mathrm{Ver}_X$ ($\mathrm{Ver}_Z$) can propagate to a higher weight error on the data qubits. This is commonly referred to as a \emph{hook error}~\cite{dennisTopologicalQuantumMemory2002}.

Unfortunately, there is no way to fix this predicament by introducing further stabilizer measurements since the new measurements can themselves be faulty. Instead, the hook errors have to be checked for specifically. Flag fault-tolerant stabilizer measurements have been introduced to solve this exact problem~\cite{chamberlandFlagFaulttolerantError2018,chaoFlagFaultTolerantError2020}. The idea is to add further \enquote{flag} ancillae that interact with the stabilizer measurement ancilla via a pair of CNOTs that---in the absence of errors---act as identity. However, if any error occurs on the measurement ancilla, it propagates to at least one of the flags and is detected. See \Cref{fig:flag-ft} for an example of a weight 4 flag fault-tolerant measurement.

\begin{figure}[t]
  \centering
  \begin{tikzpicture}
    \begin{yquant}
      qubit {} q[4];

      hspace {0.5cm} q;      
      [after=q]
      qubit {$\ket{0}$} a[1];
      cnot a[0] | q[0];
      [after=a]
      qubit {$\ket{+}$} flag[1];
      cnot a[0] | flag[0];
      cnot a[0] | q[1];
      cnot a[0] | q[2];
      cnot a[0] | flag[0];
      cnot a[0] | q[3];
      measure  a;
      measure {$X$} flag;
    \end{yquant}
  \end{tikzpicture}
  \caption{Flag fault-tolerant measurement of a weight-$4$ stabilizer~\cite{chamberlandFlagFaulttolerantError2018}}
  \label{fig:flag-ft}
\end{figure}

Flag fault tolerance was conceived as a way to give decoders additional information to account for hook errors, which can lead to complicated decision trees for decoders depending on the exact pattern of flags that measure an error. We require no such finesse and opt to include the flag measurements as further measurements upon which we post-select. If any flag is measured in $-1$, this indicates that an error was propagated from an ancilla to the data, and we discard the state. 

Not all stabilizers necessarily have to be measured flag fault-tolerantly. If we measure $\mathrm{Ver}_X$ first, we can account for any hook errors introduced by this circuit by adapting the measurements for $\mathrm{Ver}_Z$. Only $\mathrm{Ver}_Z$, then, has to be measured using flags. Depending on the code and state preparation circuit, reversing the order of verification circuits might be better in terms of CNOTs. Still, only the second verification circuit has to be flag fault-tolerant.
With this flag-fault-tolerant measurement scheme for the second error type, we arrive at the proposed scheme sketched in \Cref{fig:full-ft}.

Depending on the weight of the measurements in the first verification circuit, the number of additional hook errors might be pretty significant.
These additional errors might necessitate complex measurements in the subsequent verification circuit.
In that case, it might be better to also flag the first layer.

At this point, we have to revisit the issue of optimal circuits for the verification circuit synthesis. %
The introduction of flag-fault-tolerant measurements complicates the gate-optimality of the verification measurements even for the individual fault sets.
First, no generally optimal flag scheme is known~\cite{chamberlandFlagFaulttolerantError2018,chaoFlagFaultTolerantError2020}.
Secondly, there is also a trade-off between the size weight of the measured stabilizers and the size of the corresponding flag circuit.
It might, therefore, be better to measure more low-weight stabilizers with low flag overhead instead of high-weight stabilizers with high flag overhead. Note that depending on the minimum-weight stabilizer generators of the given code, the use of flag qubits can not always be avoided.

We leave the problem of solving this global optimization problem for future work. In this work, we focus on providing locally optimal state preparation and verification circuits without considering flag overhead.

\section{Numerical Evaluations} %
\label{sec:eval}

The implementation of the methods presented in this work is publicly available as part of the Munich Quantum Toolkit~\cite{willeMQTHandbookSummary2024, berentMQTQECC2025}. 
To evaluate the efficacy of the proposed methods, we have synthesized state preparation circuits for various CSS codes.

\subsection{Circuit Construction}

\begin{table*}[t]
  \centering
    \caption{Circuit metrics for fault-tolerant state preparation circuits obtained using our optimal (SAT) and heuristic approach when optimizing for CNOT count, and comparison with the recently proposed RL method~\cite{zenQuantumCircuitDiscovery2024}.}\label{tab:circuits_cnot}
    \resizebox{\textwidth}{!}{
      \begin{threeparttable}      
  \begin{tabular}{l c | c c c |c c c| c c c | c c c}
    \multirow{3}{*}{Code}                                                                                                    & \multirow{2}{*}{State} & \multicolumn{6}{c|}{State Preparation Circuit} & \multicolumn{6}{c}{Verification Circuit}                                                                                      \\ 
                                                                                                                             &                        & \multicolumn{3}{c}{\#CNOTs}                    & \multicolumn{3}{c|}{Depth} & \multicolumn{3}{c}{\#CNOTs} & \multicolumn{3}{c}{\#Measurements}                                 \\
                                                                                                                             &                        & SAT                                            & Heuristic                  & RL                          & SAT & Heuristic & RL & SAT & Heuristic & RL & SAT & Heuristic & RL \\ \midrule
    \multirow{2}{*}{$\llbracket 15, 1, 3 \rrbracket$ Reed-Muller (tetrahedral) code~\cite{steaneQuantumReedMullerCodes1999}} & $\ket{0}_L$            & 22                                             & 22                         & 22                          & 4   & 4         & 7  & 3   & 3         & 3  & 1   & 1         & 1  \\
                                                                                                                             & $\ket{+}_L$            & \textbf{23}                                    & \textbf{23}                & 24                          & 5   & 5         & 6  & 7   & 7         & 7  & 1   & 1         & 1  \\
    \multirow{2}{*}{$\llbracket 9, 1, 3 \rrbracket$ Shor Code~\cite{shorSchemeReducingDecoherence1995}}                      & $\ket{0}_L$            & 8                                              & 8                          & 8                           & 5   & 6         & 5  & 3   & 3         & 3  & 1   & 1         & 1  \\
                                                                                                                             & $\ket{+}_L$            & 6                                              & 6                          & 6                           & 2   & 2         & 2  & 0   & 0         & 0  & 0   & 0         & 0  \\
    $\llbracket 9, 1, 3 \rrbracket$ Rotated Surface Code~\cite{bombinOptimalResourcesTopological2007}                        & $\ket{0}_L$            & 8                                              & 8                          & 8                           & 5   & 6         & 3  & 3   & 4         & 3  & 1   & 1         & 1  \\
    $\llbracket 15, 7, 3 \rrbracket$ Hamming Code~\cite{steaneSimpleQuantumErrorcorrecting1996}                              & $\ket{0}_L^7$          & -                                              & 22                         & -                           & -   & 12        & -  & 6*  & 7         & -  & 2   & 2         & -  \\
    $\llbracket 17, 1, 5 \rrbracket$ 2D Color Code~\cite{bombinTopologicalQuantumDistillation2006}                           & $\ket{0}_L$            & -                                              & \textbf{23}                & 25                          & -   & 11        & 7  & 29* & 32        & -  & 6*  & 5         & -  \\
    $\llbracket 19, 1, 5 \rrbracket$ 2D Color Code~\cite{bombinTopologicalQuantumDistillation2006}                           & $\ket{0}_L$            & -                                              & 27                         & -                           & -   & 8         & -  & 33* & 38        & -  & 6*  & 6         & -  \\
    $\llbracket 25, 1, 5 \rrbracket$ Rotated Surface Code~\cite{bombinOptimalResourcesTopological2007}                       & $\ket{0}_L$            & -                                              & 28                         & -                           & -   & 13        & -  & 34* & 34        & -  & 5*  & 5         & -  \\    
  \end{tabular}
  \begin{tablenotes}
    \tiny
  \item Entries marked with \enquote{*} denote circuits where the optimal verification circuit was generated for the heuristic state preparation circuit.
  \item Entries with a \enquote{-} denote instances where the respective method failed to provide a circuit within the timeframe of 24h.
  \item Entries in bold denote instances where the obtained circuits using our methods yields improved metrics.
\end{tablenotes}
\end{threeparttable}
}
\end{table*}

\begin{table*}[t]
  \centering
    \caption{Circuit metrics for fault-tolerant state preparation circuits obtained using our optimal (SAT) and heuristic approach when optimizing for circuit depth, and comparison with the recently proposed RL method~\cite{zenQuantumCircuitDiscovery2024}.}\label{tab:circuits_depth}
    \resizebox{\textwidth}{!}{
      \begin{threeparttable}
  \begin{tabular}{l c | c c c |c c c| c c c | c c c}
    \multirow{3}{*}{Code}                                         & \multirow{2}{*}{State} & \multicolumn{6}{c|}{State Preparation Circuit} & \multicolumn{6}{c}{Verification Circuit}                                                                                             \\ 
                                                                                     &               & \multicolumn{3}{c}{\#CNOTs} & \multicolumn{3}{c|}{Depth} & \multicolumn{3}{c}{\#CNOTs} & \multicolumn{3}{c}{\#Measurements}                                         \\
                                                                                     &               & SAT                         & Heuristic                  & RL                          & SAT        & Heuristic  & RL & SAT & Heuristic & RL & SAT & Heuristic & RL \\ \midrule
        \multirow{2}{*}{$\llbracket 15, 1, 3 \rrbracket$ Reed-Muller (tetrahedral) code~\cite{steaneQuantumReedMullerCodes1999}} & $\ket{0}_L$   & 22                          & 22                         & 22                          & \textbf{4} & \textbf{4} & 7  & 3   & 7         & 3  & 1   & 2         & 1  \\
                                                                                                                                 & $\ket{+}_L$   & \textbf{23}                 & \textbf{23}                & 24                          & \textbf{5} & \textbf{5} & 6  & 7   & 7         & 7  & 1   & 1         & 1  \\
        \multirow{2}{*}{$\llbracket 9, 1, 3 \rrbracket$ Shor Code~\cite{shorSchemeReducingDecoherence1995}}                      & $\ket{0}_L$   & 8                           & 9                          & 8                           & \textbf{3} & \textbf{3} & 5  & 3   & 5         & 3  & 1   & 2         & 1  \\
                                                                                                                                 & $\ket{+}_L$   & 6                           & 6                          & 6                           & 2          & 2          & 2  & 0   & 0         & 0  & 0   & 0         & 0  \\
        $\llbracket 9, 1, 3 \rrbracket$ Rotated Surface Code~\cite{bombinOptimalResourcesTopological2007}                        & $\ket{0}_L$   & 8                           & 8                          & 8                           & 3          & 3          & 3  & 3   & 4         & 3  & 1   & 1         & 1  \\
        $\llbracket 15, 7, 3 \rrbracket$ Hamming Code~\cite{steaneSimpleQuantumErrorcorrecting1996}                              & $\ket{0}_L^7$ & 23                          & 22                         & -                           & 4          & 4          & -  & 8*  & 10        & -  & 2*  & 2         & -  \\
        $\llbracket 17, 1, 5 \rrbracket$ 2D Color Code~\cite{bombinTopologicalQuantumDistillation2006}                           & $\ket{0}_L$   & -                           & \textbf{23}                & 25                          & -          & \textbf{5} & 7  & 28* & 31        & -  & 6*  & 5         & -  \\
        $\llbracket 19, 1, 5 \rrbracket$ 2D Color Code~\cite{bombinTopologicalQuantumDistillation2006}                           & $\ket{0}_L$   & -                           & 27                         & -                           & -          & 5          & -  & 33* & 43        & -  & 5*  & 6         & -  \\
        $\llbracket 25, 1, 5 \rrbracket$ Rotated Surface Code~\cite{bombinOptimalResourcesTopological2007}                       & $\ket{0}_L$   & -                           & 28                         & -                           & -          & 5          & -  & 35* & 38        & -  & 4*  & 5         & -  \\
  \end{tabular}
  \begin{tablenotes}
    \tiny
  \item Entries marked with \enquote{*} denote circuits where the optimal verification circuit was generated for the heuristic state preparation circuit.
  \item Entries with a \enquote{-} denote instances where the respective method failed to provide a circuit within the timeframe of 24h.
  \item Entries in bold denote instances where the obtained circuits using our methods yields improved metrics.
\end{tablenotes}
\end{threeparttable}
}
\end{table*}

\begin{table*}[t]
  \centering
  \caption{Circuit metrics for the best circuits found using the proposed methods for the generalized non-deterministic fault-tolerant state preparation scheme.}\label{tab:circuits}
  \resizebox{\linewidth}{!}{
  \begin{tabular}{l c | c c c c | c c}
    \multirow{2}{*}{Code}                                                                                                                     &   \multirow{2}{*}{State}       & \multicolumn{4}{c|}{This work} & \multicolumn{2}{c}{Projective Initialization} \\           
                                                                                                                             &               & \#Ancillas w/o reuse & \#Ancillas with reuse & \#CNOTs & Depth & \#CNOTs worst case & \#CNOTs best case \\                       \midrule
    \multirow{2}{*}{$\llbracket 15, 1, 3 \rrbracket$ Reed-Muller (tetrahedral) code (see~\Cref{fig:tetrahedral_circuits})~\cite{steaneQuantumReedMullerCodes1999}} & $\ket{0}_L$   & 1                    & 1                     & 25      & 8     & 96                 & 64                \\                                   
                                                                                                                             & $\ket{+}_L$   & 5                    & 2                     & 42      & 14     & 120                & 80                \\
    $\llbracket 9, 1, 3 \rrbracket$ Rotated Surface Code~\cite{bombinOptimalResourcesTopological2007}                        & $\ket{0}_L$   & 1                    & 1                     & 11      & 6     & 36                 & 24                \\                                              
    $\llbracket 12, 2, 4 \rrbracket$ \enquote{Carbon} code (see~\Cref{fig:carbon})~\cite{m.p.dasilvaDemonstrationLogicalQubits2024}                  & $\ket{0}_L^2$ & 3                    & 2                     & 30      & 16    & 120                & 90                \\                                              
    $\llbracket 15, 7, 3 \rrbracket$ Hamming Code (see~\Cref{fig:hamming})~\cite{steaneSimpleQuantumErrorcorrecting1996}                              & $\ket{0}_L^7$ & 2                    & 1                     & 28      & 11    & 96                 & 64                \\                                             
    $\llbracket 17, 1, 5 \rrbracket$ 2D Color Code (see~\Cref{fig:ket0-17-1-5})~\cite{bombinTopologicalQuantumDistillation2006}                           & $\ket{0}_L$   & 13                   & 4                     & 71      & 22    & 180                & 108               \\                                             
    $\llbracket 19, 1, 5 \rrbracket$ 2D Color Code (see~\Cref{fig:ket0-19-1-5})~\cite{bombinTopologicalQuantumDistillation2006}                           & $\ket{0}_L$   & 14                   & 4                     & 120     & 25    & 210                & 126               \\
    $\llbracket 25, 1, 5 \rrbracket$ Rotated Surface Code~\cite{bombinOptimalResourcesTopological2007}                       & $\ket{0}_L$   & 25                   & 4                     & 127     & 31    & 200                & 120               \\
  \end{tabular}
  }
\end{table*}

In this section, we focus on the investigation of circuit metrics.

\subsubsection{Circuits for Distance 3 and 5 Codes}
\label{sec:circuits-distance-3}

We generate non-deterministic fault-tolerant state preparation circuits for various $d=3$ and $d=5$ CSS codes using both the SMT and heuristic methods proposed in~\Cref{sec:state-prep} and~\Cref{sec:ver-circ}. 
This allows us to investigate the scalability of the exact approach as well as how close the heuristic approach is to the optimum. 
In addition to that, we also compare the results to solutions generated in Ref.~\cite{zenQuantumCircuitDiscovery2024} (as reviewed in~\Cref{sec:related-work}, a synthesizer based on reinforcement learning).
Note that on our machine the RL agent does not manage to find state preparation circuits within 24 hours for codes with 12 or more physical qubits. 
In this case, we used the best circuit from the examples provided at \hyperlink{https://github.com/remmyzen/rlftqc}{https://github.com/remmyzen/rlftqc}. 

\Cref{tab:circuits_cnot} and~\Cref{tab:circuits_depth} provide a summary of the circuits obtained for both gate- and depth-optimization~\footnote{
Note that the RL method does not minimize  circuit depth and hence may produce different circuits of varying depth for multiple attempts.
Here, we use the instances that have the lowest depth for a minimal CNOT count}, respectively.
The tables list the CNOT count and depth of the non-fault-tolerant state preparation and verification circuits obtained from the proposed optimal and heuristic methods as well as the RL agent. 
More precisely, the first two columns denote the code instance and the prepared logical state, while the remaining columns denote the metrics of the obtained state preparation and verification circuits. 
``SAT'' denotes the exact results, ``heuristic'' the heuristic results, and RL the solution obtained by 
 the RL agent of Ref.~\cite{zenQuantumCircuitDiscovery2024}. 

For this comparison, we only focussed on verification circuits for the primary type of error ($X$ errors for the preparation of $\ket{0}_L^{\otimes k}$ and $Z$ errors for the preparation of $\ket{+}_L^{\otimes k}$).
Note that the number of measurements is equal to the number of additional ancilla qubits needed if qubits are not reused.
Because hook errors are not considered for this comparison---and therefore, no additional flag qubits are needed---a single ancilla could be reused to perform all measurements.

These results show that for small codes, all methods produce similar results. 
For the cases where all methods produce the same circuit, however, the optimal method also yields an automatic proof that the obtained circuit is, in fact, gate- or depth-optimal. 
For larger codes ($n\geq 17$), both the optimal synthesis method and the RL method fail to yield a result within a 24-hour timeout. 
For the optimal synthesis methods specifically, the state preparation times out.
The verification circuit synthesis scales much more favorably. 
Moreover, the proposed approach clearly outperforms the RL synthesizer.

Interestingly, for the check matrices we investigated, optimizing for depth never comes at the cost of additional CNOT gates.
This does not mean, however, that this cannot happen in general as one can construct check matrices where our implementation needs more CNOTs for the depth-optimal circuit.

For codes with more qubits ($n \geq 17$), the heuristic state preparation circuit starts to outperform the RL synthesis regarding CNOT count and circuit depth. 
For larger codes, the SMT and RL synthesis fail (indicated by ``$-$'' in the table) to produce any state preparation circuit within 24 hours. 
The heuristic synthesis, however, generates all circuits in under one second. 
Even in cases where the SMT solver cannot find an optimal state preparation circuit, it is still possible to find the optimal verification circuit (for the state preparation circuit generated by the heuristic synthesis). 
From these tables, we can conclude that synthesizing state preparation circuits using our greedy heuristic approach in combination with the optimal verification circuit synthesis can produce competitive non-deterministic fault-tolerant state preparation circuits even for moderately-sized codes.

\begin{table*}[t]
  \centering
  \caption{Verification Circuits for the $\ket{0}_L$ preparation circuit for the $\llbracket 31,1,7 \rrbracket$ color code in~\Cref{fig:d7-color}.}
  \label{tab:d7_prep}
  \begin{tabular}{l | c c c |  c c c | c c c}
    \multirow{2}{*}{Fault Set} & \multicolumn{3}{c|}{Optimal} & \multicolumn{3}{c|}{Heuristic} & \multicolumn{3}{c}{Naive} \\
     & Weights &  \#Flags & \#CNOTs & Weights & \#Flags & \#CNOTs  &  Weights & \#Flags & \#CNOTs  \\ \midrule
    $\calE_1^X$ & $[7,4,7]$ & 9 & 36 & $[12, 8]$ & 10 & 40 & $4\times 11, 8 \times 3$ & 20 & 112 \\
    $\calE_2^X$ & $[8,7,7,7]$ & 16 & 61 & $[11,8,7,7]$ & 18 & 69 & $4\times 11, 8 \times 3$ & 20 & 112 \\
    $\calE_3^X$ & - & - & - & $[4,8,4,4,7,8,7]$ & 19 & 80 & $4\times 11, 8 \times 3$ & 20 & 112 \\ \midrule
    $\calE_1^Z$ & $[12, 8]$ & 10 & 40 & $[12, 12]$ & 12 & 48 & $4\times 11, 8 \times 3$ & 20 & 112 \\
    $\calE_2^Z$ & $[8,8,8,8]$ & 16 & 64 & $[12,8,4,8,24]$ & 26 & 108 & $4\times 11, 8 \times 3$ & 20 & 112 \\
    $\calE_3^Z$ & - & - & - & $[8,4,8,8,12,8]$ & 23 & 94 & $4\times 11, 8 \times 3$ & 20 & 112 \\
  \end{tabular}
\end{table*}

Next, we use our methods to generate fully fault-tolerant circuits as described in~\Cref{sec:fully-fault-tolerant}.
Again, we give the optimal synthesis method a 24-hour timeout. 
We use the heuristic method if the solver does not find a solution within this timeframe.
All verification circuits were constructed gate-optimally for the $d=3$ and $d=5$ circuits.
Some of the non-fault-tolerant state preparation circuits were generated by the heuristic, either because the optimal synthesis method times out or because the heuristic circuit yielded a verification circuit that leads to an overall improved CNOT count.
\Cref{tab:circuits} shows circuit metrics for gate-optimized circuits found using this approach.
This table shows explicitly how many additional ancillas are required to perform the verification circuit, the number of CNOTs needed for the entire circuit, and the circuit depth.

\Cref{tab:circuits} furthermore lists the number of CNOTs required for the general scheme where the state is prepared through initialization of the data qubits in the  all $\ket{0}$ ($\ket{+}$) state and subsequent projective measurements of the stabilizers.
Note that this state preparation procedure only requires measuring one type of stabilizer since, for CSS codes, the initial product state is already stabilized by all $X$ or $Z$ stabilizers of the code.
Since measuring the stabilizers non-deterministically projects the state into the $+1$ or $-1$ eigenspace, multiple rounds of measurements are required to determine the correct sign of the stabilizers.
For the codes considered here this requires $d$ measurement rounds and in the best case $\floor{\frac{d}{2}}+1$.
\Cref{tab:circuits} lists the CNOT count for both cases.
The numbers for the projective initialization do not consider hook errors and flag qubits, which would add further CNOT gates to these circuits.
Except for the distance five surface code, the proposed procedure requires fewer CNOTs than even the best-case scenario for the projective initialization (and even fewer for all cases if hook errors and flag qubits are taken into account) and significantly fewer CNOTs than the worst-case.
The benefit of projective initialization is, however, that the state is obtained deterministically, which may be favorable for larger code instances in general.

The circuits used for~\Cref{tab:circuits} are partially listed in~\Cref{sec:appendix-circuits}.
All circuits listed in~\Cref{tab:circuits} are also publicly available~\cite{willeMQTHandbookSummary2024, pehamDataAutomatedSynthesis2025}.
An example of a fault-tolerant state preparation of $\ket{0}_L$ for a distance $5$ code is depicted in~\Cref{fig:ket0-17-1-5}.
There, the verification circuit accounts for a large part of the entire circuit compared to the non-fault-tolerant state preparation subcircuit.

\subsubsection{Scaling to Higher-Distance Codes}
\label{sec:scal-high-dist}

Although the optimal verification circuit synthesis approach was sufficient to synthesize circuits for the $d=3$ and $d=5$ codes considered in the previous section, naturally, there are limits to how far this method can be pushed.
When scaling to higher-distance codes, the stabilizer groups, the number of qubits, and the size of the required state preparation circuits grow.
In turn, the size of the fault sets that need verification grows rapidly.
This directly impacts the SAT encoding since the size of the encoding grows as $O(|S|\cdot |\calE|)$, where $S$ is a generating set of the stabilizer group and $\calE$ is the set of errors to be verified.
Given the layered construction of the verification circuit as illustrated in~\Cref{fig:full-ft}, and that the sizes of the fault sets increase for later layers, the optimal synthesis algorithm can still generate the early layers of the verification circuit.

Consider the state preparation circuit for the zero state of the $\llbracket 31,1,7 \rrbracket$ color code.
A state preparation circuit with $46$ CNOTs and depth $5$ generated by the greedy algorithm is given in~\Cref{fig:d7-color}.
To verify this state preparation circuit, we must construct three layers of $X$ verification and three layers for the $Z$ verification.

For $1$ and $2$ independent errors in the state preparation circuit, we obtain fault sets of sizes
\begin{align*}
  &|\calE_1^X| = 16 \quad &|\calE_2^X| = 443 \\
  &|\calE_1^Z| = 14 \quad &|\calE_2^Z| = 315,
\end{align*}
which are still small enough to attempt optimal synthesis for.

\Cref{tab:d7_prep} compares the verification circuits obtained for each fault set by the optimal synthesis, heuristic synthesis, and a naive verification circuit constructed by measuring all stabilizers of the code.
Given the relatively high number of measurements in the $X$ verification, we have opted to flag the measurements in both layers instead of adding potential hook errors to the fault sets for the $Z$ verification.
The table shows the stabilizer weights of the stabilizers measured in the verification circuit, the number of flag qubits needed to make these measurements flag-fault-tolerant, and the total number of CNOTs required to implement this circuit.
In total, using the optimal verification circuit for the first two layers and the circuit obtained by the heuristic method for the third layer in the respective subcircuits yields a verification circuit with $375$ CNOTs, $119$ ancilla qubits and depth $43$.
In comparison, the naive verification circuit has $720$ CNOTs, $243$ ancilla qubits, and a depth of $41$.
The lower depth comes from the fact that the measured stabilizers in the naive verification circuit generally have a lower weight than those in the synthesized measurements.

From \Cref{tab:d7_prep} we can see that quite a large overhead is incurred by flagging the stabilizer measurements.
The weight-$24$ measurement the heuristic synthesis finds for $\calE_2^X$ is particularly bad in this regard.
This suggests that for higher-distance codes with potential high-weight measurements, including a penalty for the stabilizer weight might yield better results. Even a relatively small increase in stabilizer weight could lead to a rise in the number of required flags and, thus, the overall number of CNOTs.
For the $\llbracket 31, 1, 7 \rrbracket$ color code, this effect is even more pronounced since the weight $4$ measurements require only a single flag qubit and two CNOTs to flag. For example, combining $3$ weight-$4$ measurements into one weight $12$ measurements would require $6$ flags instead of $3$ and, therefore, $12$ CNOTs instead of $6$.
As we focus on smaller codes that could be realized on near-term devices, we leave this avenue of investigation open for future work.

\subsection{Logical Error Rate Simulation}

To show that the synthesized circuits indeed yield the expected logical error rates, we have simulated them using the open-source stabilizer simulator Stim~\cite{gidneyStimFastStabilizer2021} under a standard circuit-level depolarizing noise model as considered in previous works~\cite{chamberlandFlagFaulttolerantError2018,heussenMeasurementFreeFaultTolerantQuantum2024}. 
Noisy gates are modeled as ideal gates followed by a depolarizing noise channel parameterized by an error probability $p$.
Since the considered circuits are composed only of two-qubit gates, we define the two-qubit depolarizing channel:
\begin{align*}
  \epsilon_2(\rho) &= (1-p)\rho + \frac{p}{15} \sum_{E \in \mathcal{E} }E \rho E,
\end{align*}

where $\mathcal{E} = \{P_1\otimes P_2 \mid P_1,P_2 \in \{I,X,Y,Z\} \}\setminus \{I\otimes I\}$. 
In addition, there is a probability of $2p/3$ that a qubit will erroneously be initialized in the $-1$ eigenstate of the respective basis and measurements have a $2p/3$ probability of being flipped. 
If a qubit is idle during one layer of CNOTs, it is also subject to single-qubit depolarizing ``idling'' noise.
What constitutes a layer of CNOTs depends on the computational model. We consider the two extreme cases: either all parallel CNOTs are executed in one layer, or only one CNOT can be performed in each layer.
By default, we set $p_\mathrm{idle}=p/100$, but since idle noise can have a significant impact on the performance of the state preparation circuits, we also consider other settings as described below.

Simulation runs for which any of the verification measurements indicate an error are discarded. 
For the remaining states, we calculate the logical error rate by measuring all data qubits in the $Z$ basis, performing one round of ideal syndrome extraction and correction using a lookup-table decoder. 
A logical error is registered when the corrected classical bitstring anticommutes with any of the $Z$ operators of the code. 
For each physical error rate $p$, we sample until we have collected $500$ logical errors. 
For our simulations, this ensures that the uncertainties for the logical error rates are below $2\%$ for all runs.

\begin{figure*}[t]
   \captionsetup{
    justification=raggedright,
}
  \centering
  \begin{subfigure}[t]{.8\linewidth}
    \centering
    \includegraphics[width=\linewidth]{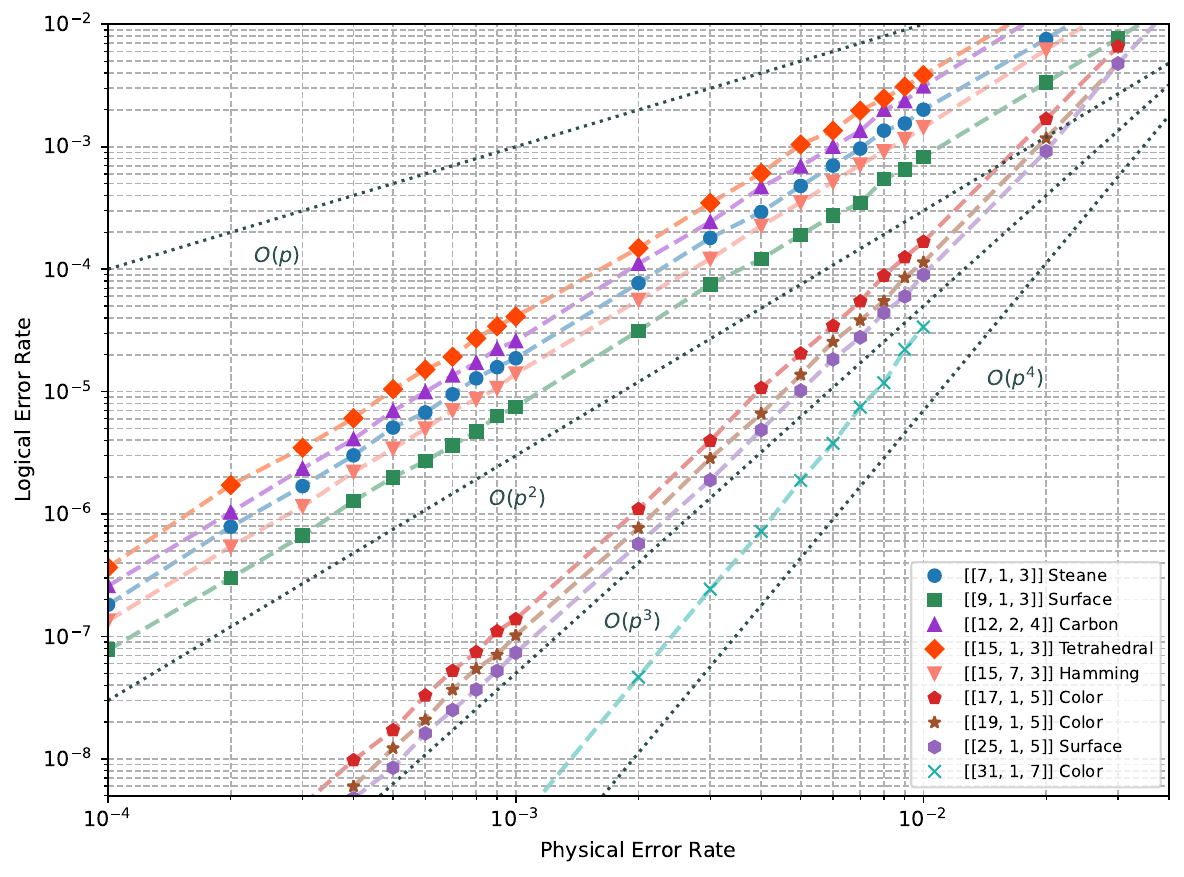}
  \end{subfigure}

  \begin{subfigure}[t]{.8\linewidth}
    \centering
    \includegraphics[width=\linewidth]{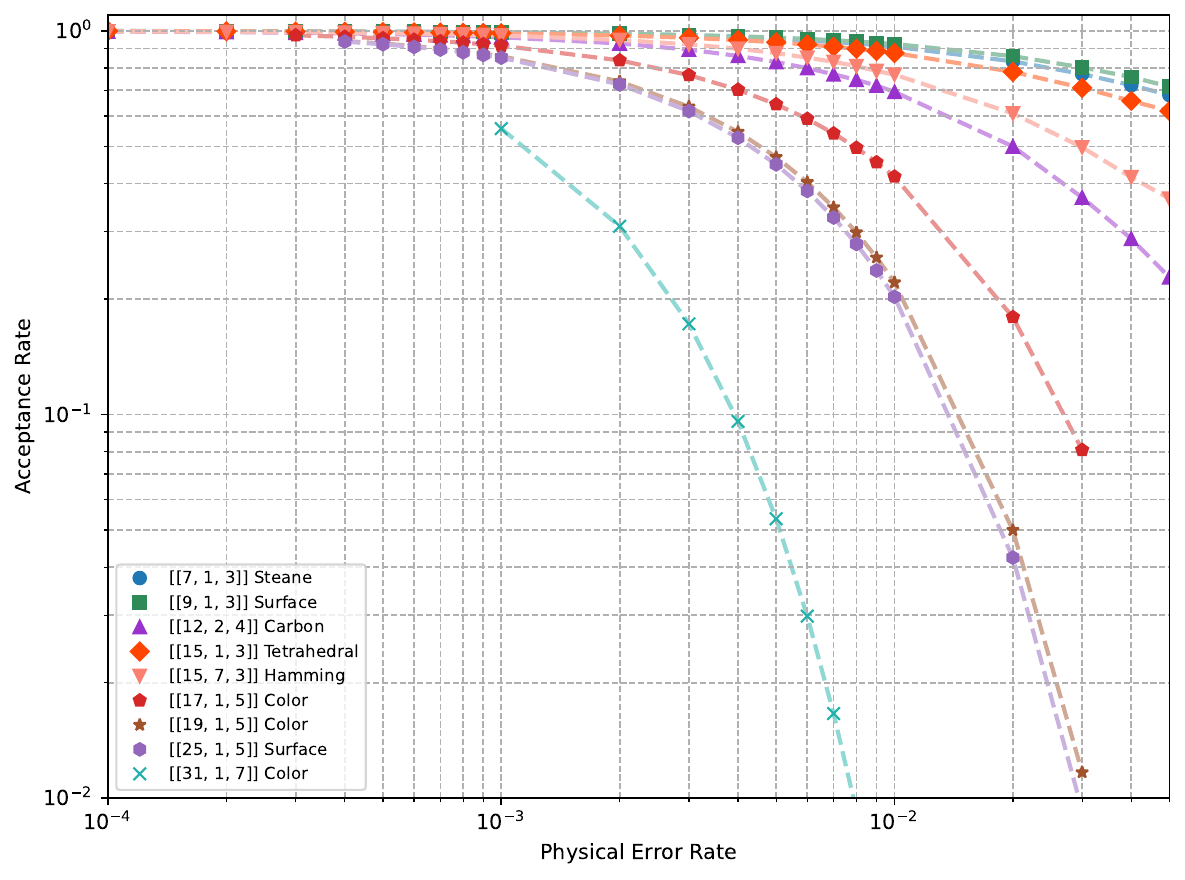}
  \end{subfigure}
  \caption{Logical error rate and acceptance rate for non-deterministic fault-tolerant state preparation circuits for the logical $\ket{0}_L^{\otimes k}$ state constructed with our methods for various $d=3$, $d=5$ and $d=7$ CSS codes under circuit-level depolarizing noise ($p_\mathrm{idle}=p/100$) using a LUT decoder.}
  \label{fig:results_all}
\end{figure*}
\begin{figure*}[t]
 \captionsetup{
    justification=raggedright,
}
  \centering
  \begin{subfigure}[t]{.8\linewidth}
    \centering
    \includegraphics[width=\linewidth]{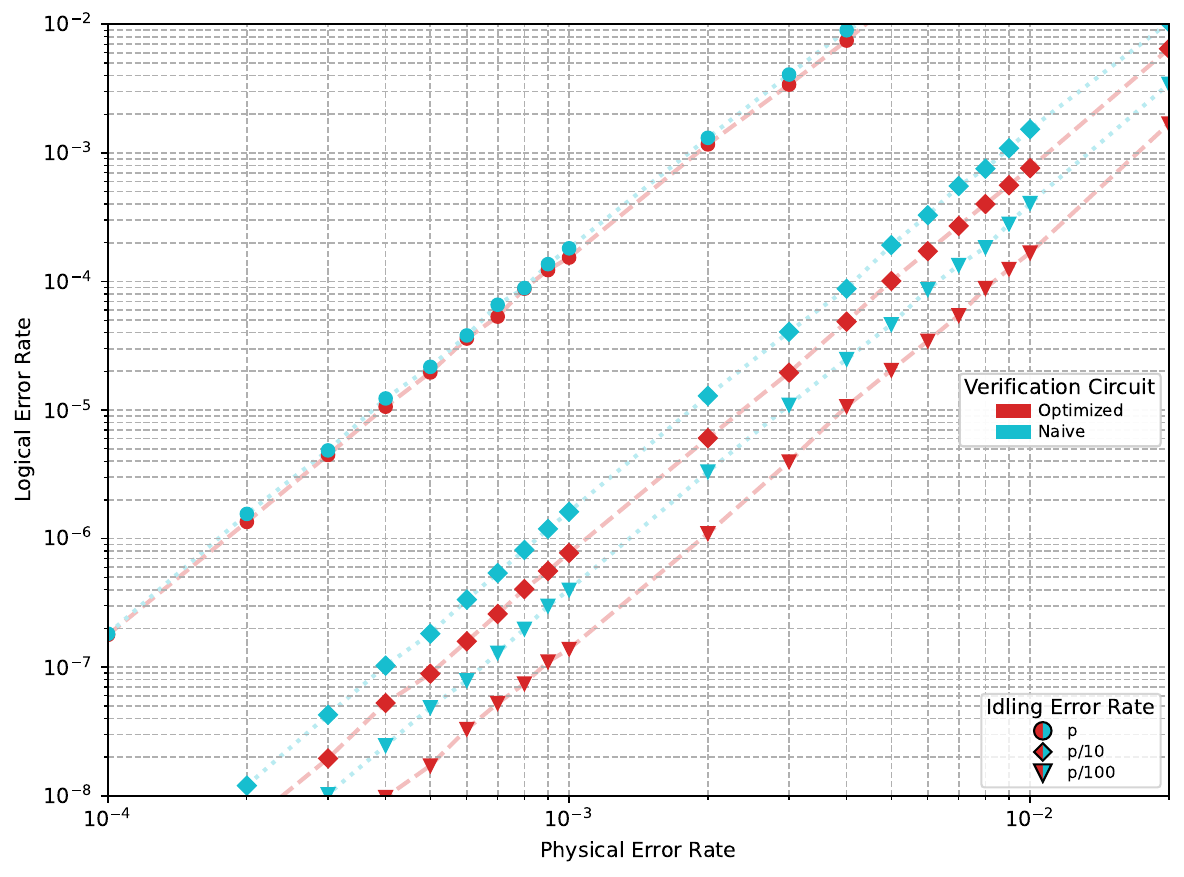}
  \end{subfigure}

  \begin{subfigure}[t]{.8\linewidth}
    \centering
    \includegraphics[width=\linewidth]{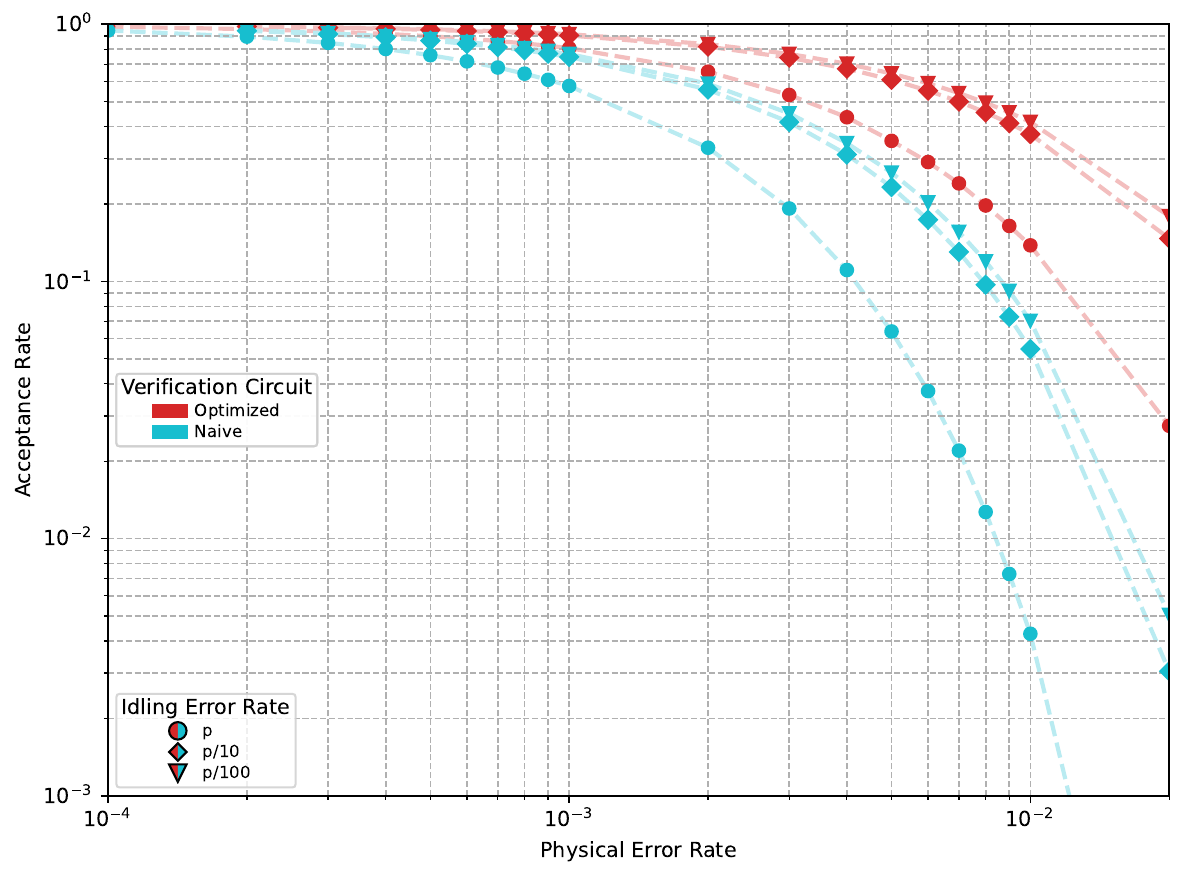}
  \end{subfigure}
  \caption{
  Logical error rate and acceptance rate for non-deterministic fault-tolerant state preparation circuits for the logical $\ket{0}_L$ state of the $\llbracket 17, 1, 5\rrbracket$ color code constructed with our methods under circuit-level depolarizing noise with \emph{parallel} gate execution using a LUT decoder.}
  \label{fig:results_d5_parallel}
\end{figure*}
\begin{figure*}[t]
 \captionsetup{
    justification=raggedright,
}
  \centering
  \begin{subfigure}[t]{.8\linewidth}
    \centering
    \includegraphics[width=\linewidth]{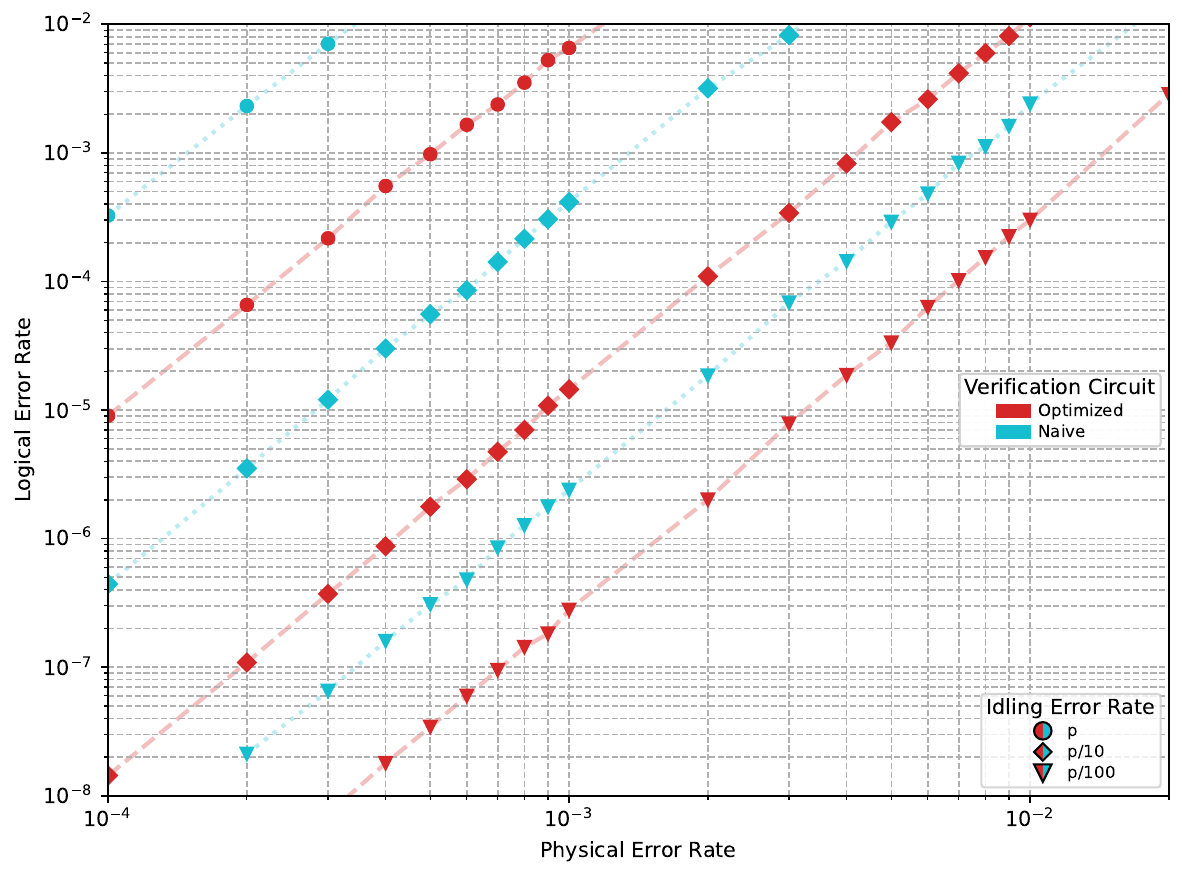}
  \end{subfigure}

  \begin{subfigure}[t]{.8\linewidth}
    \centering
    \includegraphics[width=\linewidth]{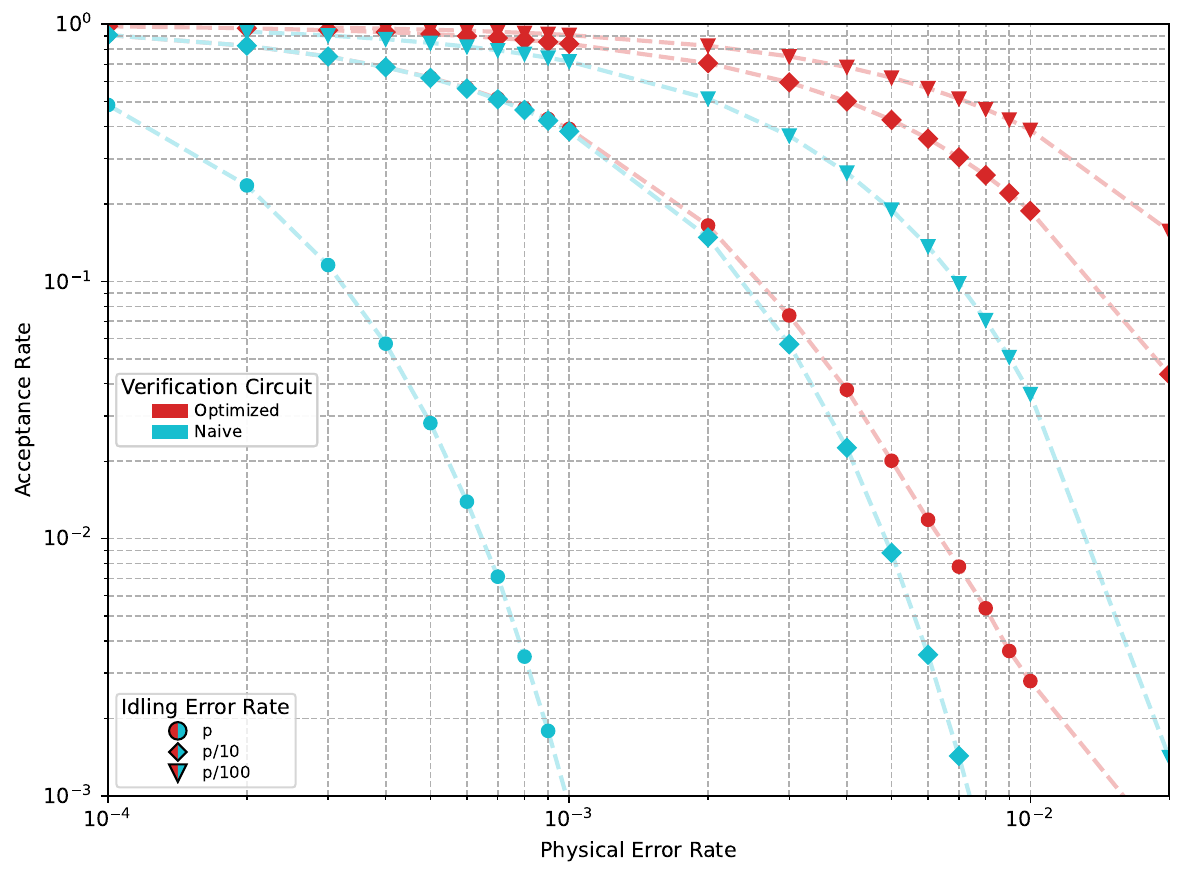}
  \end{subfigure}
  \caption{
    Logical error rate and acceptance rate for non-deterministic fault-tolerant state preparation circuits for the logical $\ket{0}_L$ state of the $\llbracket 17, 1, 5\rrbracket$ color code constructed with our methods under circuit-level depolarizing noise with \emph{sequential} gate execution using a LUT decoder.}
  \label{fig:results_d5_sequential}
\end{figure*}

The results of the simulations with parallel gates and $p_\mathrm{idle}=p/100$ are shown in~\Cref{fig:results_all}. 
For all codes, we find that the observed error suppression is in agreement with the expected error rate relative to the distance $d$ of the code~\footnote{Note that, as pointed out e.g., in Ref.~\cite{m.p.dasilvaDemonstrationLogicalQubits2024}, if one operates the quantum circuit for the $\llbracket 12,2,4 \rrbracket$ carbon code in an error detecting regime, i.e., each run with any error (on data qubits) triggering a stabilizer is discarded, all two-error events will be sorted out, and one obtains a $\sim p^3$ scaling for the logical error rate. In our simulations, we focus only on single errors in the preparation of the carbon code states.}. 
All simulated distance $d=3$ codes exhibit a \mbox{$p_L\sim O(p^2)$} scaling of the logical error rate, the distance $d=5$ codes exhibit a $p_L\sim O(p^3)$ scaling and the $\llbracket 31, 1, 7 \rrbracket$ color code exhibits a $p_L\sim O(p^4)$ scaling. 
This shows that the circuits produced with our methods indeed generalize the post-selection scheme previously known for distance $d=3$ codes~\cite{buttFaultTolerantCodeSwitchingProtocols2024,postlerDemonstrationFaulttolerantUniversal2022, gotoMinimizingResourceOverheads2016, zenQuantumCircuitDiscovery2024} to higher distances $d>3$. 
Moreover, though the acceptance rates for larger codes are drastically worse for high physical error rates, this difference becomes less pronounced for the $d=5$ codes, where all verification circuits accept the prepared state in $>80\%$ of the cases at physical error rates of $10^{-3}$.

To illustrate the performance of distance $d=5$ state preparation more clearly, we compare the optimized state preparation circuits with a \enquote{naive} (canonical) verification circuits, where all stabilizers are measured twice.
In addition, to visualize the impact of gate parallelism, we simulate the circuits with parallel gates and sequential gates.
The respective results for the $\llbracket 17,1,5\rrbracket$ color code can be seen in \Cref{fig:results_d5_parallel} and \Cref{fig:results_d5_sequential}.
Similar plots for the $\llbracket 19,1,5\rrbracket$ color code and the $\llbracket 25,1,5\rrbracket$ rotated surface code are shown in~\Cref{sec:appendix-plots}.

These results show that the difference between logical error rates of the optimized versus the naive scheme decreases with higher idling noise.
This is reasonable since, depth-wise, there is not much difference between the two circuits.
In fact, the naive scheme has fewer idling qubits if CNOTs are executed in parallel and, therefore, if the idling noise is in the same order as the two-qubit gate noise, this behavior is expected.
The story is a bit different when looking at acceptance rates.
With $p>10^{-3}$ the difference between the acceptance rates of the naive and optimized schemes is quite stark.
For example, at $p=4\cdot 10^{-3}$ and $p_\mathrm{idle}=p/100$, the optimized circuit has an acceptance rate of about $70\%$, whereas the naive verification circuit gives an acceptance rate of less than $20\%$.

When gates are executed sequentially, both circuits are subject to about the same amount of idling noise.
In this case, the optimized circuit outperforms the naive circuit by orders of magnitude even for comparatively low idling noise.

These results show that optimized state preparation schemes are particularly suited for quantum computing platforms with a lower degree of parallelism. For architectures like neutral atoms~\cite{everedHighfidelityParallelEntangling2023,bluvsteinLogicalQuantumProcessor2024} where two-qubit gates can be performed globally, the difference is less stark, assuming idling errors are manageable. For these systems, alternative schemes with a high degree of parallelism, like in the preparation of the Golay code~\cite{paetznickFaulttolerantAncillaPreparation2013}, could be exploited.

\section{Conclusion}
\label{sec:conclusion}
In this work, we have introduced automated methods to synthesize (near-)optimal non-deterministic fault-tolerant state preparation circuits for logical computational basis states of any CSS code. 
We have shown the efficacy of the proposed methods by synthesizing fault-tolerant state preparation circuits for various small CSS codes---relevant for near-term devices---and demonstrated that these circuits have the desired fault-tolerance properties through numerical simulations. 
Even though we have framed the verification circuit synthesis problem in the context of state preparation of certain logical basis states, our verification circuit synthesis approach is more general. It can be used to verify other circuits as well.

In this work, we have focused on constraints on specific stabilizer measurements, but generally, in fault-tolerant circuit design, we are more concerned with constraints between different measurements. Looking forward, an interesting next step would be to apply methods used in this work to the design of fault-tolerant circuits from the perspective of detector-error models as in Ref.~\cite{derksDesigningFaulttolerantCircuits2024}.

As discussed in~\Cref{sec:fully-fault-tolerant}, our methods do not guarantee a globally optimal circuit but rather give locally optimal subcircuits for the state and verification circuit synthesis problems. 
The natural next step would be to try to unify this into one synthesis problem. 
Given that the fault set of the non-fault-tolerant state preparation circuit has a huge impact on the verification circuit, we suspect there is a large room for improvement, especially when scaling to even higher distances. Alternatively, one could use these circuits in a concatenated scheme. Whether this can be extended to concatenation in a trivial matter or whether further measurements are needed to verify error propagation through the transversal CNOTs is left for future work.

In our work, we have assumed all-to-all connectivity between all data and ancilla qubits. 
Many quantum computing platforms do not provide such lenient locality constraints; however, further steps must be taken to ensure that non-deterministic fault-tolerant state preparation circuits can be embedded into such architectures without sacrificing fault-tolerance.

All in all, this work is an important step towards providing automated design tools for fault-tolerant quantum computing research. 
The optimized circuits generated by our methods might play a role in further near-term experimental demonstrations of fault-tolerant quantum computing. 

\section{Used Software}
\label{sec:used-software}

The proposed methods have been implemented using the Python packages Stim~\cite{gidneyStimFastStabilizer2021}, NumPy~\cite{harrisArrayProgrammingNumPy2020}, Z3~\cite{leonardodemouraZ3EfficientSMT2008}, Qiskit~\cite{qiskitcontributorsQiskitOpensourceFramework2023}, LDPC~\cite{roffeLDPCPythonTools2022} and Multiprocess~\cite{mckernsBuildingFrameworkPredictive2012,michaelmckernsPathosFrameworkHeterogeneous2010}. For the simulations we have furthermore used GNU parallel~\cite{tangeGNUParallel20182018}. The plots have been generated with Matplotlib~\cite{hunterMatplotlib2DGraphics2007}.

\acknowledgements{
The authors would like to thank Timo Hillmann for valuable comments on an initial version of the manuscript.

T.P., L.S., L.B., and R.W. acknowledge funding from the European Research Council (ERC) under the European Union’s Horizon 2020 research and innovation program (grant agreement No.\ 101001318) and Millenion, grant agreement No.\ 101114305). 
This work was part of the Munich Quantum Valley, which is supported by the Bavarian state government with funds from the Hightech Agenda Bayern Plus, and has been supported by the BMWK on the basis of a decision by the German Bundestag through project QuaST, as well as by 
the BMK, BMDW, and the State of Upper Austria in the 
frame of the COMET program (managed by the FFG).

Furthermore, MM gratefully acknowledges support by the European Union’s Horizon Europe research and innovation programme under Grant Agreement No. 101114305 (“MILLENION-SGA1” EU Project) and the ERC Starting Grant QNets through Grant No. 804247. This research is also part of the Munich Quantum Valley (K-8), which is supported by the Bavarian state government with funds from the Hightech Agenda Bayern Plus. He additionally acknowledges support by the BMBF project MUNIQC-ATOMS (Grant No. 13N16070), by the Deutsche Forschungsgemeinschaft (DFG, German Research Foundation) under Germany’s Excellence Strategy “Cluster of Excellence Matter and Light for Quantum Computing (ML4Q) EXC 2004/1” 390534769.

\section*{Data Availability}
\label{par:data}

The data that support the findings of this article are openly available~\cite{berentMQTQECC2025, pehamDataAutomatedSynthesis2025}.

\bibliography{lit_header,zotero}

\appendix

\clearpage
\onecolumngrid
\section{Further Plots for $d=5$ Codes}\label{sec:appendix-plots}
\vspace*{-5mm}
\begin{figure*}[h!]
 \captionsetup{
    justification=raggedright,
}
  \centering
  \begin{subfigure}[t]{.75\linewidth}
    \centering
    \includegraphics[width=\linewidth]{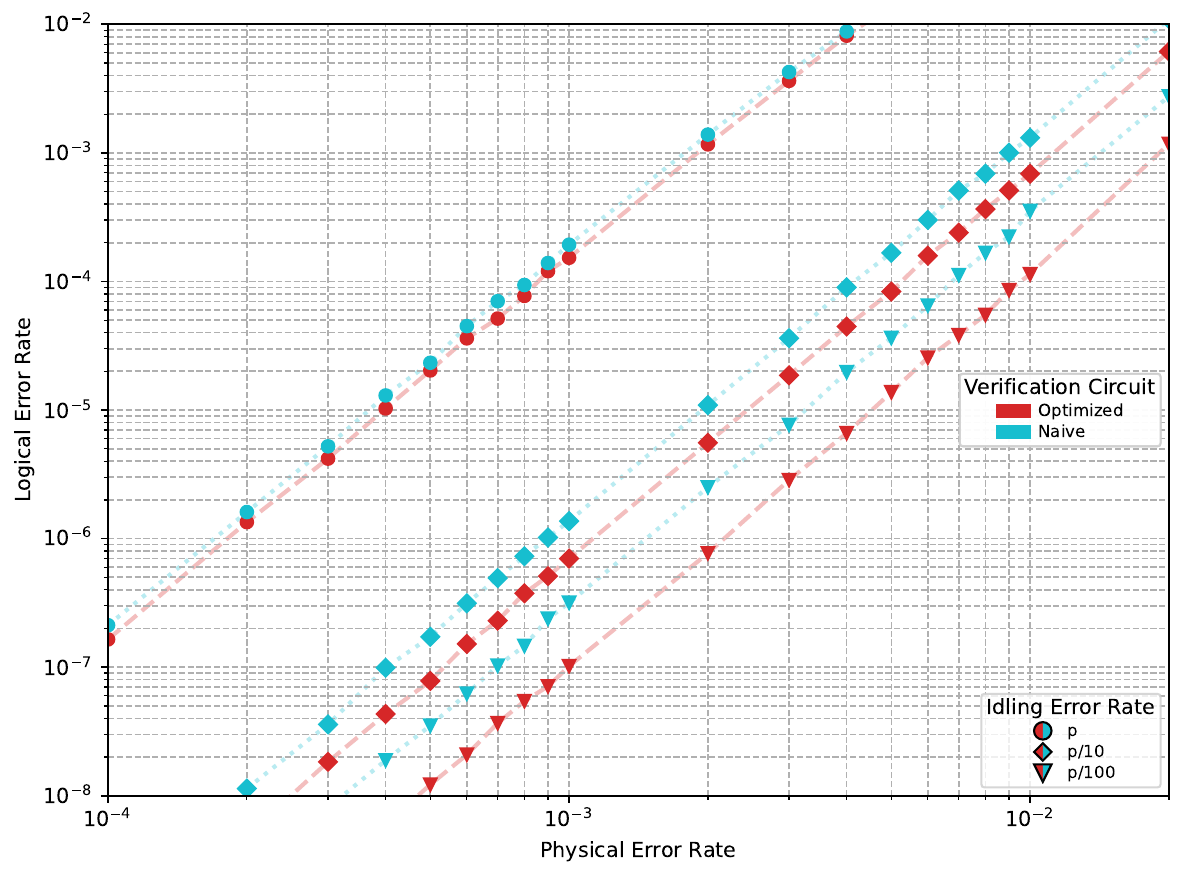}
  \end{subfigure}

  \begin{subfigure}[t]{.75\linewidth}
    \centering
    \includegraphics[width=\linewidth]{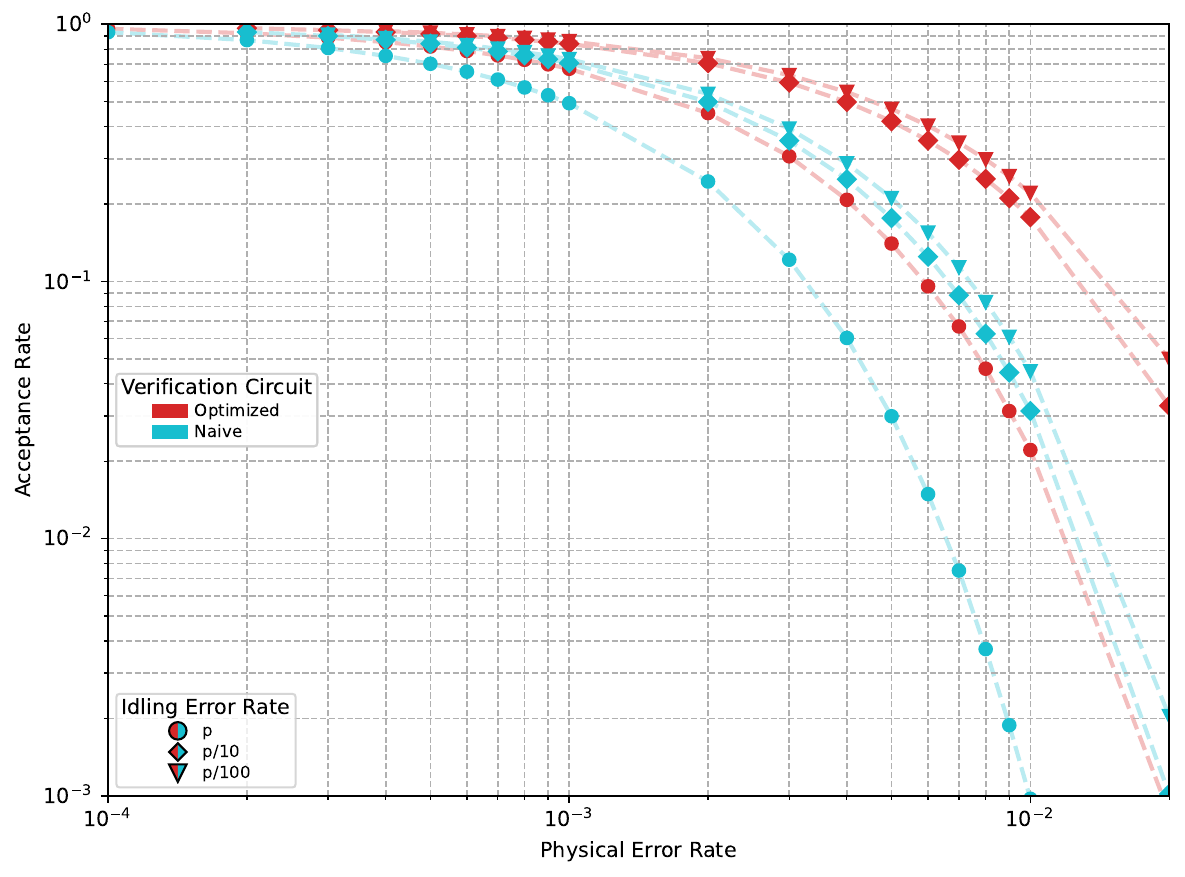}
  \end{subfigure}
  \caption{
  Logical error rate and acceptance rate for non-deterministic fault-tolerant state preparation circuits for the logical $\ket{0}_L$ state of the $\llbracket 19, 1, 5\rrbracket$ color code constructed with our methods under circuit-level depolarizing noise with \emph{parallel} gate execution using a LUT decoder.}
  \label{fig:results_cc_666_5_parallel}
\end{figure*}

\begin{figure*}[h]
 \captionsetup{
    justification=raggedright,
}
  \centering
  \begin{subfigure}[t]{.8\linewidth}
    \centering
    \includegraphics[width=\linewidth]{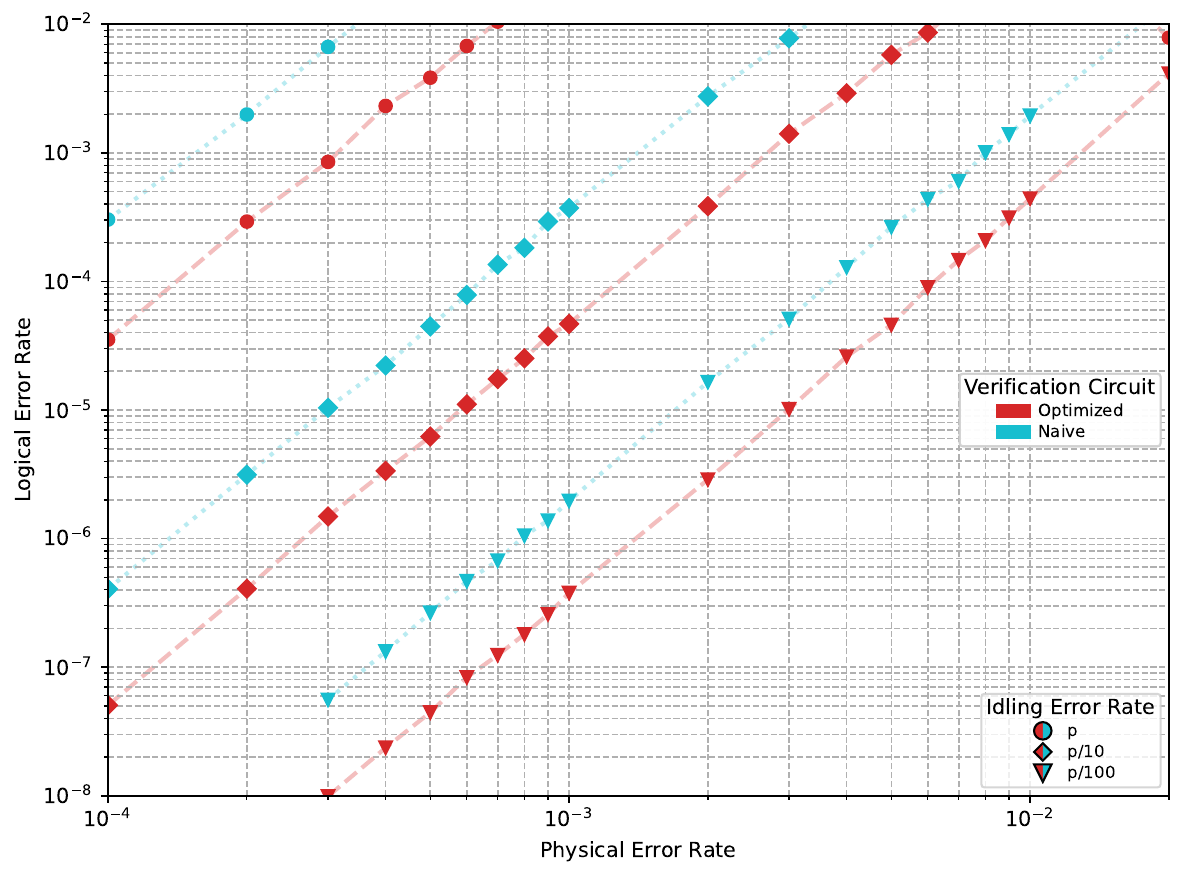}
  \end{subfigure}

  \begin{subfigure}[t]{.8\linewidth}
    \centering
    \includegraphics[width=\linewidth]{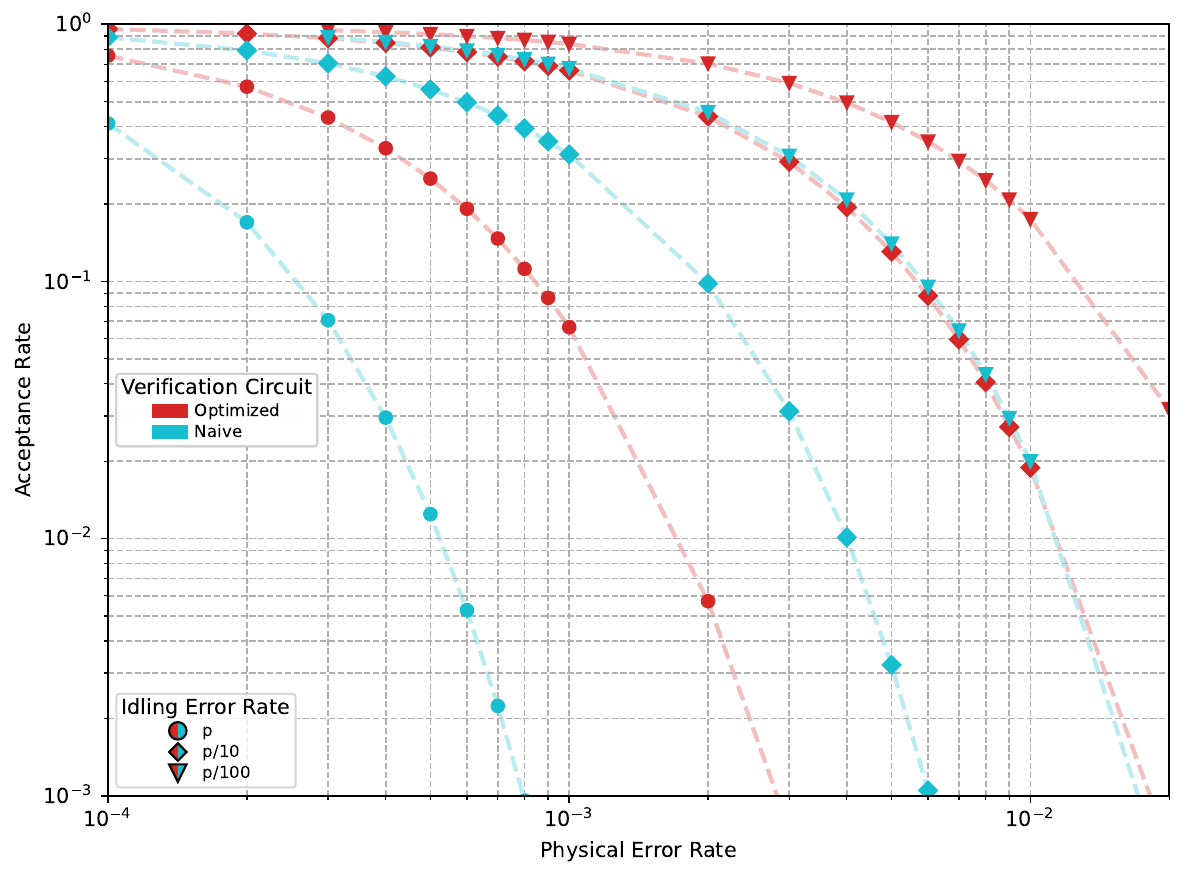}
  \end{subfigure}
  \caption{
    Logical error rate and acceptance rate for non-deterministic fault-tolerant state preparation circuits for the logical $\ket{0}_L$ state of the $\llbracket 19, 1, 5\rrbracket$ color code constructed with our methods under circuit-level depolarizing noise with \emph{sequential} gate execution using a LUT decoder.}
  \label{fig:results_cc_666_5_sequential}
\end{figure*}

\begin{figure*}[h]
 \captionsetup{
    justification=raggedright,
}
  \centering
  \begin{subfigure}[t]{.8\linewidth}
    \centering
    \includegraphics[width=\linewidth]{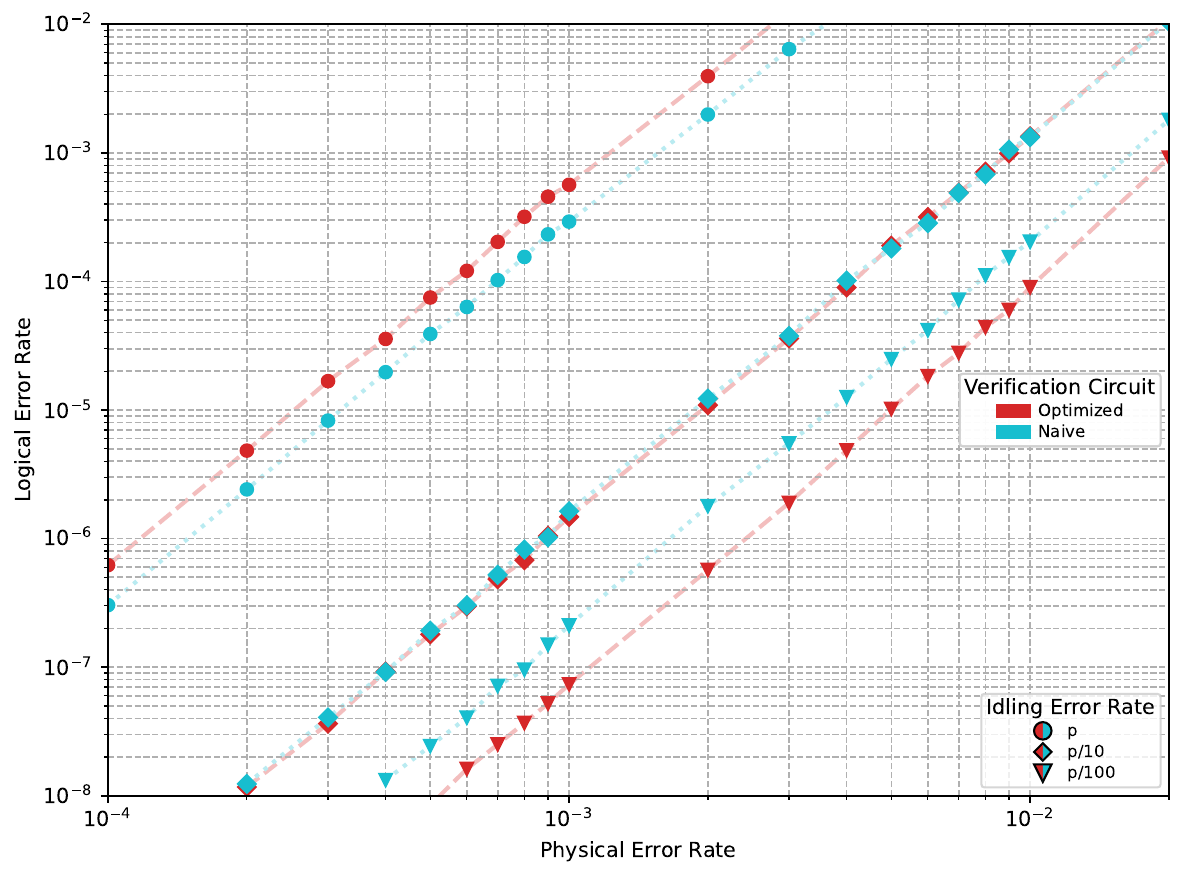}
  \end{subfigure}

  \begin{subfigure}[t]{.8\linewidth}
    \centering
    \includegraphics[width=\linewidth]{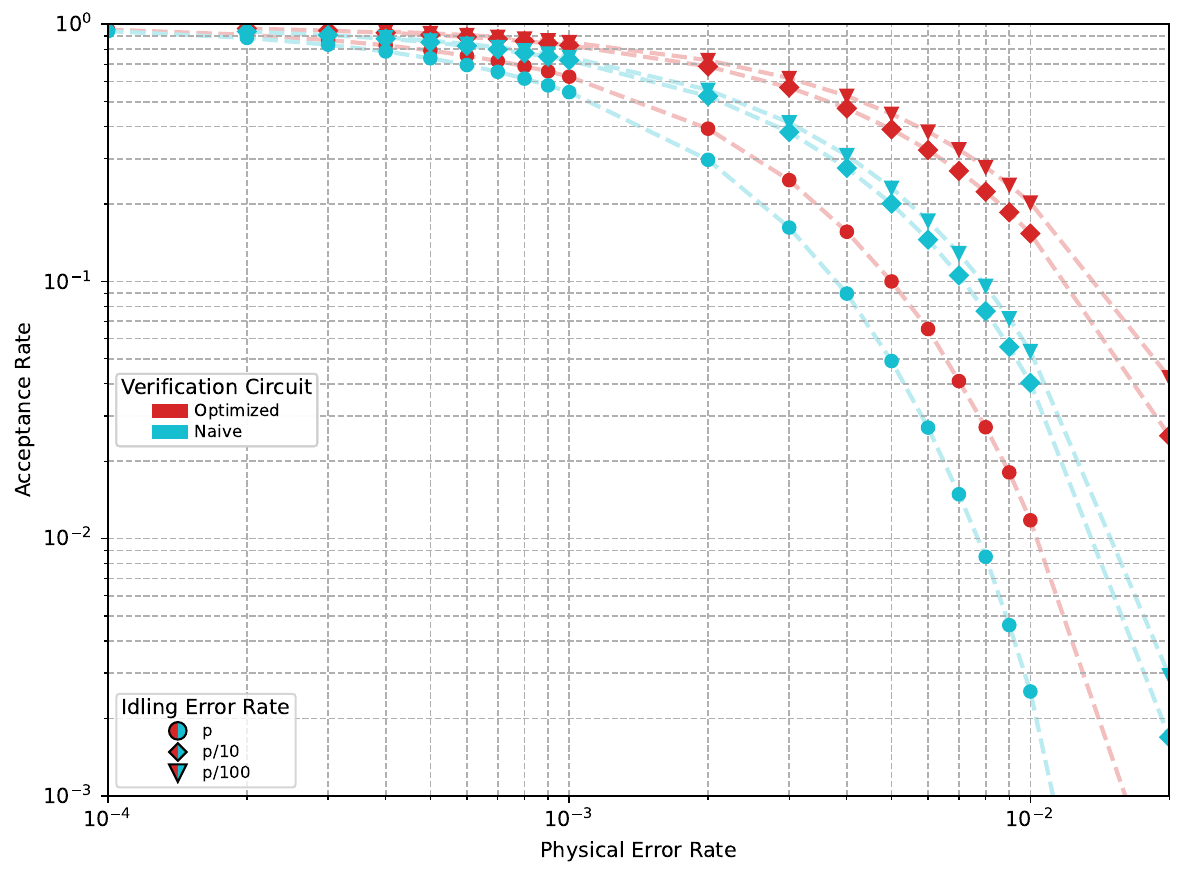}
  \end{subfigure}
  \caption{
    Logical error rate and acceptance rate for non-deterministic fault-tolerant state preparation circuits for the logical $\ket{0}_L$ state of the $\llbracket 25, 1, 5\rrbracket$ rotated surface  code constructed with our methods under circuit-level depolarizing noise with \emph{parallel} gate execution using a LUT decoder.}
  \label{fig:results_surface_5_parallel}
\end{figure*}

\begin{figure*}[h]
 \captionsetup{
    justification=raggedright,
}
  \centering
  \begin{subfigure}[t]{.8\linewidth}
    \centering
    \includegraphics[width=\linewidth]{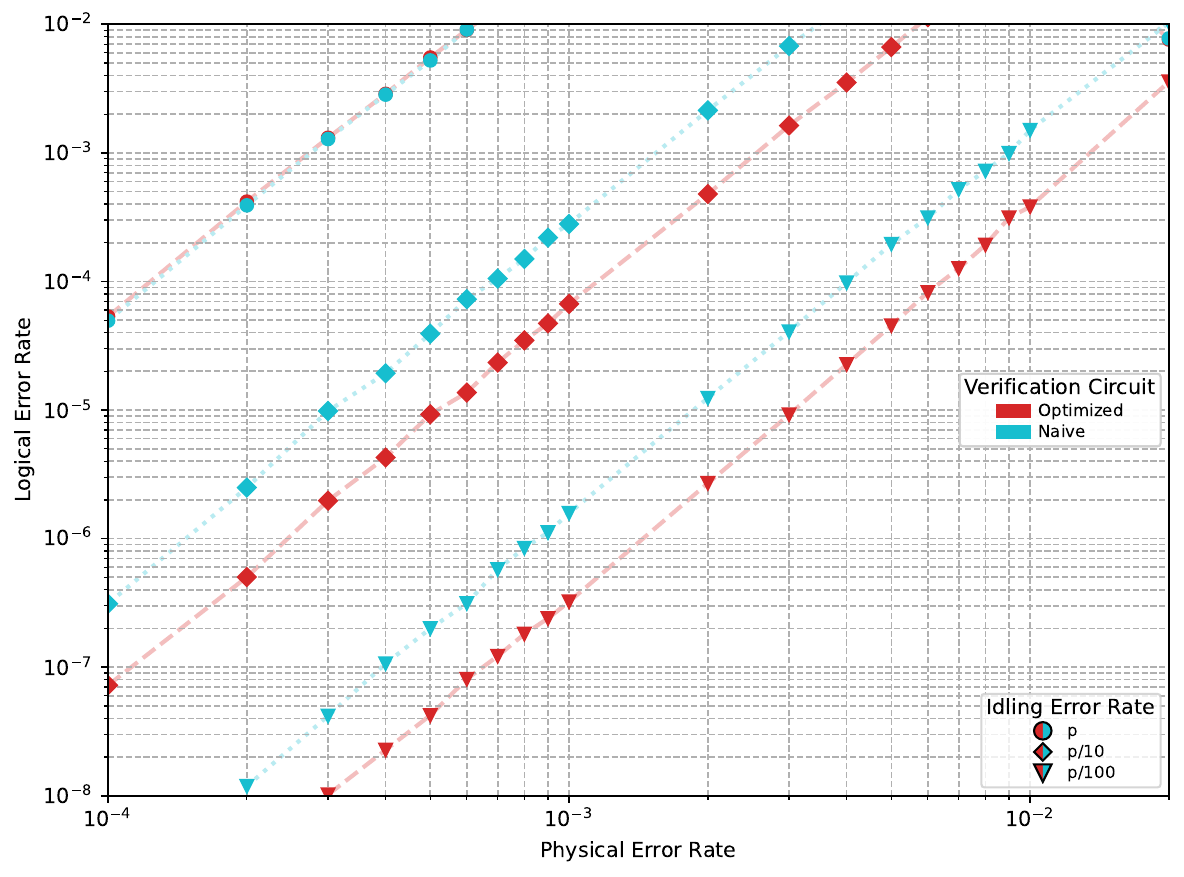}
  \end{subfigure}

  \begin{subfigure}[t]{.8\linewidth}
    \centering
    \includegraphics[width=\linewidth]{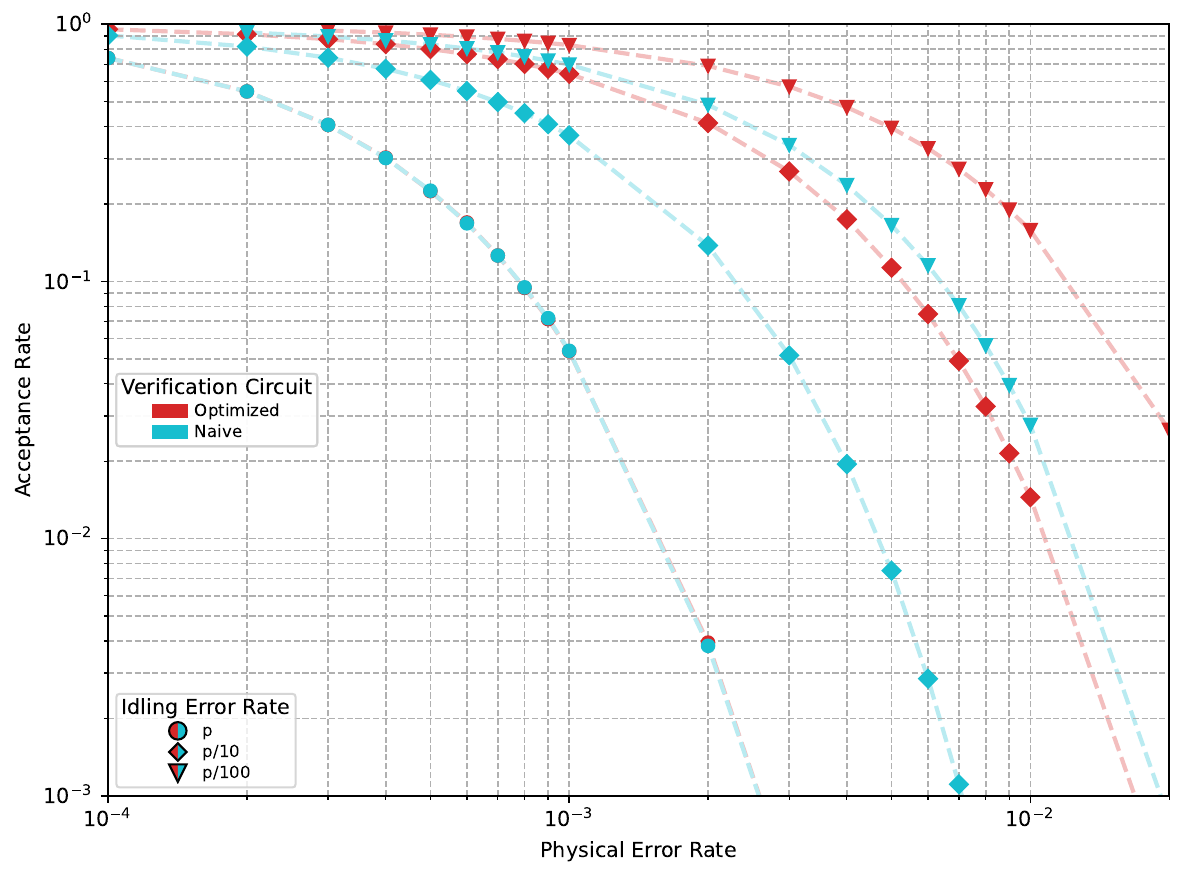}
  \end{subfigure}
  \caption{
    Logical error rate and acceptance rate for non-deterministic fault-tolerant state preparation circuits for the logical $\ket{0}_L$ state of the $\llbracket 25, 1, 5\rrbracket$ rotated surface code constructed with our methods under circuit-level depolarizing noise with \emph{sequential} gate execution using a LUT decoder.}
  \label{fig:results_surface_5_sequential}
\end{figure*}
\clearpage
\newlength{\slength}
\settowidth{\slength}{$+$}
\section{Circuits}\label{sec:appendix-circuits}

\begin{figure*}[htb!]
\begin{subfigure}[h!]{\linewidth}
  \centering
  \begin{tikzpicture}
    \begin{yquant}
      qubit {$\ket{+}$} q;
      qubit {$\ket{\makebox[\slength][c]{0}}$} q[+7];
      qubit {$\ket{+}$} q[+1];
      qubit {$\ket{\makebox[\slength][c]{0}}$} q[+1];
      qubit {$\ket{+}$} q[+1];
      qubit {$\ket{\makebox[\slength][c]{0}}$} q[+3];
      qubit {$\ket{+}$} q[+1];
      
      cnot q[7] | q[14];
      cnot q[3] | q[10];
      cnot q[1] | q[8];
      cnot q[12] | q[14];
      cnot q[9] | q[10];
      cnot q[5] | q[7];
      cnot q[2] | q[3];
      cnot q[1] | q[0];
      cnot q[13] | q[14];
      cnot q[11] | q[12];
      cnot q[9] | q[8];
      cnot q[6] | q[7];
      cnot q[4] | q[5];
      cnot q[3] | q[0];
      cnot q[2] | q[1];
      cnot q[13] | q[10];
      cnot q[12] | q[8];
      cnot q[7] | q[0];
      cnot q[11] | q[9];
      cnot q[6] | q[3];
      cnot q[4] | q[1];
      cnot q[5] | q[2];

      [after=q]
      qubit {$\ket{+}$} verification_flag[1];
      cnot verification_flag[0] | q[0];
      cnot verification_flag[0] | q[5];
      cnot verification_flag[0] | q[11];
      measure verification_flag[0];
    \end{yquant}
  \end{tikzpicture}
  \caption{Non-deterministic fault-tolerant $\ket{0}_L$ of the $\llbracket 15,1,3 \rrbracket$ code}
  \label{fig:tetrahedral_zero}
\end{subfigure}
\bigskip

\begin{subfigure}[h]{\linewidth}
  \centering
  \resizebox{\linewidth}{!}{
  \begin{tikzpicture}
    \begin{yquant}
      qubit {$\ket{\makebox[\slength][c]{0}}$} q[2];
      qubit {$\ket{+}$} q[+1];
      qubit {$\ket{\makebox[\slength][c]{0}}$} q[+1];
      qubit {$\ket{+}$} q[+1];
      qubit {$\ket{\makebox[\slength][c]{0}}$} q[+1];
      qubit {$\ket{+}$} q[+1];
      qubit {$\ket{\makebox[\slength][c]{0}}$} q[+3];
      qubit {$\ket{+}$} q[+1];
      qubit {$\ket{\makebox[\slength][c]{0}}$} q[+2];
      qubit {$\ket{+}$} q[+1];
      qubit {$\ket{\makebox[\slength][c]{0}}$} q[+1];

cnot q[11]|q[13];
cnot q[12]|q[11];
cnot q[5]|q[6];
cnot q[3]|q[10];
cnot q[1]|q[4];
cnot q[9]|q[11];
cnot q[8]|q[12];
cnot q[7]|q[4];
cnot q[5]|q[3];
cnot q[1]|q[2];
cnot q[14]|q[7];
cnot q[13]|q[6];
cnot q[9]|q[2];
cnot q[8]|q[1];
cnot q[0]|q[3];
cnot q[4]|q[5];
cnot q[14]|q[13];
cnot q[10]|q[9];
cnot q[0]|q[1];
cnot q[12]|q[4];
cnot q[7]|q[6];
cnot q[11]|q[5];
cnot q[3]|q[2];

[after=q]
qubit {$\ket{+}$} x_anc[1];
cnot q[1]|x_anc[0];
cnot q[3]|x_anc[0];
cnot q[5]|x_anc[0];
cnot q[7]|x_anc[0];
cnot q[9]|x_anc[0];
cnot q[12]|x_anc[0];
cnot q[13]|x_anc[0];

measure {$X$} x_anc[0];

[after=q]
qubit  {$\ket{0}$} z_anc[2];

cnot z_anc[0]|q[0];

[after=z_anc]
qubit {$\ket{+}$} a15[1];
cnot z_anc[0]|a15[0];
cnot z_anc[0]|q[5];
cnot z_anc[0]|q[9];
cnot z_anc[0]|a15[0];
cnot z_anc[0]|q[14];
measure z_anc[0];
cnot z_anc[1]|q[3];
[after=z_anc]
qubit {$\ket{+}$} a16[1];
cnot z_anc[1]|a16[0];
cnot z_anc[1]|q[7];
cnot z_anc[1]|q[9];
cnot z_anc[1]|a16[0];
cnot z_anc[1]|q[12];

measure {$X$} a16[0];
measure {$X$} a15[0];
measure z_anc[1];
    \end{yquant}
  \end{tikzpicture}}
  \caption{Non-deterministic fault-tolerant $\ket{+}_L$ of the $\llbracket 15, 1, 3 \rrbracket$ code.}
  \label{fig:tetrahedral_plus}

\end{subfigure}
  \caption{Fault-tolerant state preparation circuits for the $\llbracket 15, 1, 3 \rrbracket$ \enquote{tetrahedral} code.}\label{fig:tetrahedral_circuits}
\end{figure*}

\begin{figure*}[htb!]
  \centering
  \resizebox{\textwidth}{!}{
  \begin{tikzpicture}
    \begin{yquant}

      qubit {$\ket{\makebox[\slength][c]{0}}$} q[1];
      qubit {$\ket{+}$} q[+2];
      qubit {$\ket{\makebox[\slength][c]{0}}$} q[+1];
      qubit {$\ket{+}$} q[+1];
      qubit {$\ket{\makebox[\slength][c]{0}}$} q[+1];
      qubit {$\ket{+}$} q[+1];
      qubit {$\ket{\makebox[\slength][c]{0}}$} q[+1];
      qubit {$\ket{+}$} q[+1];
      qubit {$\ket{\makebox[\slength][c]{0}}$} q[+3];
      cnot q[7] | q[1];
      cnot q[9] | q[6];
      cnot q[5] | q[2];
      cnot q[0] | q[7];
      cnot q[6] | q[7];
      cnot q[10] | q[4];
      cnot q[11] | q[5];
      cnot q[3] | q[9];
      cnot q[1] | q[10];
      cnot q[5] | q[8];
      cnot q[2] | q[7];
      cnot q[4] | q[8];
      cnot q[9] | q[8];
      cnot q[8] | q[2];
      cnot q[0] | q[9];
      cnot q[7] | q[4];

      [after=q]
      qubit {$\ket{\makebox[\slength][c]{0}}$} z_anc[1];
      cnot z_anc[0] | q[0];
      cnot z_anc[0] | q[2];
      cnot z_anc[0] | q[3];
      cnot z_anc[0] | q[4];
      cnot z_anc[0] | q[10];
      cnot z_anc[0] | q[11];
      measure z_anc[0];

      [after=z_anc]
      qubit {$\ket{+}$} x_anc[1];
      [after=z_anc]
      qubit {$\ket{\makebox[\slength][c]{0}}$} a62[1];
      cnot q[0] | x_anc[0];
      cnot a62[0] | x_anc[0];
      cnot q[1] | x_anc[0];
      cnot q[5] | x_anc[0];
      cnot q[6] | x_anc[0];
      cnot q[7] | x_anc[0];
      cnot a62[0] | x_anc[0];
      measure a62[0];
      cnot q[11] | x_anc[0];
      measure {$X$} x_anc[0];
    \end{yquant}
  \end{tikzpicture}}
  \caption{Non-deterministic fault-tolerant $\ket{0}_L^2$ of the $\llbracket 12, 2, 4 \rrbracket$ \enquote{Carbon} code.}
  \label{fig:carbon}
\end{figure*}

\begin{figure*}[h]
  \centering
  \resizebox{\textwidth}{!}{
  \begin{tikzpicture}
    \begin{yquant}
      qubit {$\ket{+}$} q[2];
      qubit {$\ket{\makebox[\slength][c]{0}}$} q[+9];
      qubit {$\ket{+}$} q[+2];
      qubit {$\ket{\makebox[\slength][c]{0}}$} q[+2];

      cnot q[10] | q[0];
      cnot q[0] | q[1];
      cnot q[7] | q[1];
      cnot q[3] | q[12];
      cnot q[6] | q[10];
      cnot q[8] | q[1];
      cnot q[4] | q[11];
      cnot q[11] | q[3];
      cnot q[13] | q[11];
      cnot q[9] | q[8];
      cnot q[14] | q[0];
      cnot q[10] | q[12];
      cnot q[3] | q[1];
      cnot q[5] | q[6];
      cnot q[0] | q[11];
      cnot q[6] | q[0];
      cnot q[7] | q[0];
      cnot q[1] | q[10];
      cnot q[11] | q[1];
      cnot q[9] | q[4];
      cnot q[2] | q[4];
      cnot q[4] | q[5];

      [after=q]
      qubit {$\ket{0}$} z_anc[2];

      cnot z_anc[0] | q[0];
      cnot z_anc[0] | q[9];
      cnot z_anc[0] | q[10];
      cnot z_anc[1] | q[4];
      cnot z_anc[1] | q[8];
      cnot z_anc[1] | q[11];
      measure z_anc[0];
      measure z_anc[1];
    \end{yquant}
  \end{tikzpicture}}
  \caption{Non-deterministic fault-tolerant $\ket{0}_L^7$ of the $\llbracket 15, 7, 3 \rrbracket$ Hamming code. }
  \label{fig:hamming}
\end{figure*}

\onecolumngrid

\begin{sidewaysfigure*}
  \centering
  \resizebox{\textwidth}{!}{
  \begin{tikzpicture}
    \begin{yquant}

      qubit {$\ket{+}$} q[6];
      qubit {$\ket{\makebox[\slength][c]{0}}$} q[+3];
      qubit {$\ket{+}$} q[+2];
      qubit {$\ket{\makebox[\slength][c]{0}}$} q[+7];
      
      cnot q[8] | q[4];
      cnot q[15] | q[10];
      cnot q[6] | q[4];
      cnot q[15] | q[9];
      cnot q[12] | q[5];
      cnot q[7] | q[4];
      cnot q[14] | q[3];
      cnot q[6] | q[2];
      cnot q[16] | q[14];
      cnot q[13] | q[10];
      cnot q[11] | q[9];
      cnot q[5] | q[8];
      cnot q[7] | q[3];
      cnot q[4] | q[1];
      cnot q[2] | q[0];
      cnot q[9] | q[12];
      cnot q[13] | q[1];
      cnot q[3] | q[0];
      cnot q[8] | q[15];
      cnot q[16] | q[6];
      cnot q[11] | q[5];
      cnot q[10] | q[4];
      cnot q[14] | q[2];

      [after=q]
      qubit {$\ket{\makebox[\slength][c]{0}}$} z_anc[5];
      cnot z_anc[0] | q[1];
      cnot z_anc[0] | q[5];
      cnot z_anc[0] | q[9];
      cnot z_anc[0] | q[10];
      cnot z_anc[0] | q[15];
      cnot z_anc[1] | q[0];
      cnot z_anc[1] | q[2];
      cnot z_anc[1] | q[7];
      cnot z_anc[1] | q[16];
      cnot z_anc[2] | q[0];
      cnot z_anc[2] | q[2];
      cnot z_anc[2] | q[7];
      cnot z_anc[2] | q[16];
      cnot z_anc[3] | q[1];
      cnot z_anc[3] | q[3];
      cnot z_anc[3] | q[4];
      cnot z_anc[3] | q[6];
      cnot z_anc[3] | q[14];
      cnot z_anc[4] | q[1];
      cnot z_anc[4] | q[4];
      cnot z_anc[4] | q[9];
      cnot z_anc[4] | q[11];

      [after=z_anc]
      qubit {$\ket{\makebox[\slength][c]{+}}$} x_anc[3];
      [after=z_anc]
      qubit {$\ket{\makebox[\slength][c]{0}}$} a12[3];
      [after=z_anc]
      qubit {$\ket{\makebox[\slength][c]{0}}$} a13[1];
      [after=z_anc]
      qubit {$\ket{\makebox[\slength][c]{0}}$} a14[1];

      cnot q[0] | x_anc[0];
      cnot a12[0] | x_anc[0];
      cnot q[1] | x_anc[0];
      cnot a12[1] | x_anc[0];
      cnot q[8] | x_anc[0];
      cnot q[10] | x_anc[0];
      cnot a12[2] | x_anc[0];
      cnot q[11] | x_anc[0];
      cnot q[12] | x_anc[0];
      cnot a12[0] | x_anc[0];

      cnot q[14] | x_anc[0];
      cnot a12[2] | x_anc[0];

      cnot a12[1] | x_anc[0];
      
      cnot q[16] | x_anc[0];

      cnot q[5] | x_anc[1];
      cnot a13[0] | x_anc[1];
      cnot q[10] | x_anc[1];
      cnot q[12] | x_anc[1];
      cnot a13[0] | x_anc[1];
      
      cnot q[13] | x_anc[1];

      cnot q[1] | x_anc[2];
      cnot a14[0] | x_anc[2];
      cnot q[4] | x_anc[2];
      cnot q[10] | x_anc[2];
      cnot a14[0] | x_anc[2];
      
      cnot q[13] | x_anc[2];

      measure z_anc[0];
      measure z_anc[1];
      measure z_anc[2];
      measure z_anc[3];
      measure z_anc[4];
      measure a14[0];
      measure {$X$} x_anc[0];
      measure {$X$} x_anc[2];
      measure {$X$} x_anc[1];
      measure a12;
      measure a13[0];
    \end{yquant}
  \end{tikzpicture}}
\caption{Non-deterministic fault-tolerant $\ket{0}_L$ of the $\llbracket 17, 1, 5 \rrbracket$ code.}\label{fig:ket0-17-1-5}
\end{sidewaysfigure*}

\begin{sidewaysfigure*}
  \centering
  \resizebox{\textwidth}{!}{
  \begin{tikzpicture}
    \begin{yquant}

      qubit {$\ket{\makebox[\slength][c]{0}}$} q[2];
      qubit {$\ket{+}$} q[+1];
      qubit {$\ket{\makebox[\slength][c]{0}}$} q[+2];
      qubit {$\ket{+}$} q[+1];
      qubit {$\ket{\makebox[\slength][c]{0}}$} q[+6];
      qubit {$\ket{+}$} q[+7];
 
  cnot q[9] | q[13];
  cnot q[6] | q[16];
  cnot q[3] | q[12];
  cnot q[16] | q[2];
  cnot q[13] | q[18];
  cnot q[11] | q[12];
  cnot q[7] | q[15];
  cnot q[3] | q[6];
  cnot q[4] | q[9];
  cnot q[0] | q[14];
  cnot q[12] | q[17];
  cnot q[10] | q[18];
  cnot q[8] | q[9];
  cnot q[7] | q[11];
  cnot q[3] | q[4];
  cnot q[2] | q[5];
  cnot q[1] | q[15];
  cnot q[0] | q[6];
  cnot q[18] | q[17];
  cnot q[15] | q[14];
  cnot q[9] | q[5];
  cnot q[11] | q[13];
  cnot q[10] | q[12];
  cnot q[8] | q[2];
  cnot q[6] | q[7];
  cnot q[4] | q[16];
  cnot q[1] | q[0];
  [after=q]

  qubit {$\ket{\makebox[\slength][c]{0}}$} verification_flag[6];
  cnot verification_flag[0] | q[0];
  cnot verification_flag[0] | q[5];
  cnot verification_flag[0] | q[6];
  cnot verification_flag[0] | q[9];
  cnot verification_flag[0] | q[12];
  cnot verification_flag[0] | q[13];
  cnot verification_flag[0] | q[15];
  cnot verification_flag[0] | q[18];
  cnot verification_flag[1] | q[4];
  cnot verification_flag[1] | q[5];
  cnot verification_flag[1] | q[6];
  cnot verification_flag[1] | q[7];
  cnot verification_flag[1] | q[8];
  cnot verification_flag[2] | q[0];
  cnot verification_flag[2] | q[2];
  cnot verification_flag[2] | q[3];
  cnot verification_flag[2] | q[7];
  cnot verification_flag[2] | q[8];
  cnot verification_flag[2] | q[15];
  cnot verification_flag[3] | q[1];
  cnot verification_flag[3] | q[6];
  cnot verification_flag[3] | q[11];
  cnot verification_flag[3] | q[13];
  cnot verification_flag[3] | q[14];
  cnot verification_flag[4] | q[1];
  cnot verification_flag[4] | q[2];
  cnot verification_flag[4] | q[5];
  cnot verification_flag[4] | q[15];
  cnot verification_flag[4] | q[16];
  cnot verification_flag[5] | q[4];
  cnot verification_flag[5] | q[5];
  cnot verification_flag[5] | q[9];
  cnot verification_flag[5] | q[16];

    measure verification_flag[0];
    measure verification_flag[1];
    measure verification_flag[2];
    measure verification_flag[3];
    measure verification_flag[4];
    measure verification_flag[5];
  \end{yquant}
\end{tikzpicture}}
\caption{Non-deterministic fault-tolerant $\ket{0}_L$ of the $\llbracket 19, 1, 5 \rrbracket$ color code. Here only verification for X errors is shown.}\label{fig:ket0-19-1-5}
  \label{fig:hex_cc_circuit}
\end{sidewaysfigure*}

\begin{sidewaysfigure*}
  \centering
  \resizebox{\textwidth}{!}{
  \begin{tikzpicture}
    \begin{yquant}
      qubit {$\ket{+}$} q[7];
      qubit {$\ket{\makebox[\slength][c]{0}}$} q[+1];
      qubit {$\ket{+}$} q[+5];
      qubit {$\ket{\makebox[\slength][c]{0}}$} q[+4];
      qubit {$\ket{+}$} q[+1];
      qubit {$\ket{\makebox[\slength][c]{0}}$} q[+1];
      qubit {$\ket{+}$} q[+2];
      qubit {$\ket{\makebox[\slength][c]{0}}$} q[+10];
cnot q[30] | q[9];
cnot q[22] | q[5];
cnot q[23] | q[12];
cnot q[27] | q[4];
cnot q[24] | q[3];
cnot q[7] | q[6];
cnot q[29] | q[0];
barrier q;
cnot q[16] | q[5];
cnot q[14] | q[9];
cnot q[29] | q[22];
cnot q[26] | q[24];
cnot q[25] | q[19];
cnot q[27] | q[10];
cnot q[23] | q[8];
cnot q[21] | q[3];
cnot q[4] | q[1];
cnot q[18] | q[20];
barrier q;
cnot q[12] | q[16];
cnot q[15] | q[9];
cnot q[14] | q[6];
cnot q[5] | q[2];
cnot q[28] | q[25];
cnot q[22] | q[19];
cnot q[3] | q[11];
cnot q[13] | q[10];
cnot q[18] | q[8];
cnot q[0] | q[21];
barrier q;
cnot q[16] | q[26];
cnot q[14] | q[24];
cnot q[9] | q[20];
cnot q[2] | q[17];
cnot q[1] | q[15];
cnot q[13] | q[11];
barrier q;
cnot q[16] | q[14];
cnot q[19] | q[17];
cnot q[6] | q[29];
cnot q[21] | q[27];
cnot q[30] | q[23];
cnot q[15] | q[12];
cnot q[8] | q[9];
cnot q[28] | q[5];
cnot q[26] | q[4];
cnot q[10] | q[3];
cnot q[25] | q[2];
cnot q[24] | q[1];
cnot q[7] | q[0];
  \end{yquant}
\end{tikzpicture}}
\caption{Non-fault-tolerant $\ket{0}_L$ of the $\llbracket 31, 1, 7 \rrbracket$ color code.}
  \label{fig:d7-color}
\end{sidewaysfigure*}

\end{document}